\begin{document}
\title{The application of $\sigma$-LFSR in Key-Dependent Feedback Configuration for Word-Oriented Stream Ciphers}
%
%\titlerunning{Abbreviated paper title}
% If the paper title is too long for the running head, you can set
% an abbreviated paper title here
%
\author{Subrata Nandi*\inst{1}\orcidID{0000-0001-7140-0608}\and Srinivasan Krishnaswamy\inst{2}\orcidID{0000-0001-9714-7399}\and Behrouz Zolfaghari\inst{3}\orcidID{0000-0001-6691-0988
} \and Pinaki Mitra\inst{4}\orcidID{0000-0002-8254-8234}}

\authorrunning{S. Nandi et al.}
% First names are abbreviated in the running head.
% If there are more than two authors, 'et al.' is used.
%
\institute{* Corresponding author,
	CSE Department, Indian Institute of Technology Guwahati, Gwa 781039, India \email{subrata.nandi@iitg.ac.in} \and
	EEE Department, Indian Institute of Technology Guwahati, Gwa 781039, India
	\email{srinikris@iitg.ac.in}\\
	\and
	CSE Department, Indian Institute of Technology Guwahati, Gwa 781039, India
	\email{zolfaghari@iitg.ac.in}\\
	\and
	CSE Department, Indian Institute of Technology Guwahati, Gwa 781039, India
	\email{pinaki@iitg.ac.in}
}

\maketitle              % typeset the header of the contribution
\begin{abstract}
	
 In this paper, we propose and evaluate a  method for generating key-dependent feedback configurations (KDFC) for $\sigma$-LFSRs.  $\sigma$-LFSRs with such configurations  can be applied to any stream cipher that uses a word-based LFSR. Here, a configuration generation algorithm uses the secret key(K) and the initialization vector (IV) to generate a feedback configuration. We have mathematically analysed the feedback configurations generated by this method. As a test case, we have applied this method on SNOW 2.0  and have studied its impact on resistance to various attacks. Further, we have also tested the generated keystream for randomness and have briefly described its implementation and the challenges involved in the same.

\keywords{Stream Cipher \and $\sigma$-LFSR \and  Key-Dependent Feedback Configuration \and  Primitive Polynomial \and Algebraic Attack.}
\end{abstract}
\section{Introduction}

Stream ciphers are used in a variety of applications \cite{Appl-01, Appl-02}. LFSRs (Linear Feedback Shift Register) are widely used as building blocks in stream ciphers because of their simple construction and easy implementation.

Word based LFSRs were introduced to efficiently use the structure of modern word based processors (\cite{rose1999t,berbain2008sosemanuk, xiu2011zuc, ekdahl2019new}). Such LFSRs are used in a variety of stream ciphers, most notably in the SNOW series of stream ciphers. A $\sigma$-LFSR is a word based LFSR configuration that was introduced in \cite{zeng2007high}. An important property of this configuration is that there are multiple feedback functions corresponding to a given characteristic polynomial of the state transition matrix(\cite{krishnaswamy2011number}). The number of such configurations was conjectured in (\cite{zeng2007high}). This conjecture was constructively proved in {\cite{krishnaswamy2011number}}

The knowledge of the feedback function plays a critical role in most attacks on LSFR based stream ciphers. These include algebraic attacks, correlation attacks and distinguishing attacks (\cite{billet2005resistance,watanabe2003distinguishing,nyberg2006improved,lee2008cryptanalysis,zhang2015fast}). Therefore, hiding the feedback function of the LFSR could potentially increase the security of such schemes. In this paper we try doing this by making the feedback function dependent on the secret key. The resulting configuration is called the $\sigma$-KDFC (Key Dependent Feedback Configuration). The proposed method for obtaining the feedback function from the secret key utilizes the algorithm given in \cite{krishnaswamy2011number,krishnaswamy2014number}.The feedback gains thus obtained are highly non-linear functions of the secret key. Further, the number of iterations in this algorithm can be adjusted depending on the available computing power. As an example, we study the interconnection of the $\sigma$-KDFC with the finite state machine (FSM) of SNOW-2. We use empirical tests to verify the randomness of the keystream generated by this scheme. Further, we analyse the scheme for security against various kinds of attacks.

 In this paper,  $\mathbb{F}_{2^n}$ denotes a finite field of cardinality $2^n$.  $\mathbb{F}_2^n$ denotes an $n$-dimensional vector space over $\mathbb{F}_2$. The $i^{th}$ row and $j^{th}$ column of a matrix $M \in \mathbb{F}^{n\times n}$ are denoted by $M[i,:]$ and $M[:,j]$ respectively. $M[i,j]$ denotes the $(i,j)$-th entry of the matrix $M$. The minor of $M[i,j]$ is denoted by $\mu(M[i,j])$. $\oplus $ and $+$ are used interchangeably to represent addition over $\mathbb{F}_2$. 

The rest of this paper is organized as follows. Section \ref{BasCon} introduces LFSRs, $\sigma$-LFSRs and some related concepts. Section \ref{Prop} examines $\sigma$-KDFC and its time complexity. Section \ref{AlgRes} presents the mathematical analysis on the algebraic degree of the parameters of the feedback function generated by $\sigma$-KDFC. Section \ref{CaseSt} discusses the the interconnection of $\sigma$-KDFC with the FSM of SNOW and its security against various cryptographic attacks . Section \ref{Conc} concludes the paper.

\section{LFSRs and $\sigma$-LFSRs}\label{BasCon}
An LFSR is an electronic circuit that implements a linear recurring relation (LRR)  of the form $x_{n+b} = a_{b-1}x_{n+b-1} + \cdots + a_0x_n$. It consists of a shift register with $b$ flip-flops and a linear feedback function which is typically implemented using XOR gates. The integer $b$ is called the length of the LFSR. The characteristic polynomial of an LFSR is a monic polynomial with the same coefficients as the LRR implemented by it. For example, the characteristic polynomial of an LFSR which implements the LRR given above is $x^b + a_{b-1}x^{b-1} + \cdots + a_0$. The period of the sequence generated by an LFSR of length $b$ is at most $2^{b}-1$.
%A traditional LFSR of length $b$ is a linear bit sequence generator usually implemented using a shift register with $b$ flip-flops and feedback loop containing a few XOR gates. The XOR gates implement the GF(2) addition operation, and serve to establishing a linear transformation characterized by a fixed generating polynomial which represents the feedback configuration of the LFSR. 
Further, an LFSR generates a maximum-period sequence if its characteristic polynomial is primitive \cite{New-Pr-01}. The state vector of an LFSR at any time instant is a vector whose entries are the outputs of the delay blocks at tha time instant i.e $\mathbf{x_n} = [x_n,x_{n+1},\ldots,x_{n+b-1}]$. Two consecutive state vectors are related by the equation $\mathbf{x_{n+1}} = \mathbf{x_n}P_f$ where $P_f$ is the companion matrix of the characteristic polynomial.

\begin{comment}
The companion matrix of a primitive polynomial $f(x)=x^b+ c_{b-1}x^{b-1}+c_{b-2}x^{b-2}+\cdots+c_1x+c_0$ wherein $c_0,c_1,\cdots,c_{b-1} \in \mathbb{F}_2$ is given in equation \ref{P}.

\begin{equation}\label{P}
P_f=
\begin{bmatrix}
0 & 0 & 0 & \cdots & c_0 \\
1 & 0 & 0 & \cdots & c_1 \\
0 & 1 & 0 & \cdots & c_2 \\
\vdots & \vdots & \vdots & \cdots &\vdots\\
0 & 0 & 0 & \cdots & c_{b-2}\\
0 & 0 & 0 & \cdots & c_{b-1}\\
\end{bmatrix}\in \mathbb{F}_2^{b \times b}
\end{equation}
\end{comment} 
\begin{comment}
In addition to hardware implementation using XOR and XNOR gates \cite{XOR-XNOR-01}, LFSRs can be implemented in software \cite{New-SW-01, New-SW-02}. Moreover, multi-bit input (parallel) LFSRs \cite{PLFSR-Old}, multi-bit output (MBO) LFSRs  \cite{New-SW-01} have been proposed in order to improve speed via increased parallelism.

Multi-bit input multi-bit output (MIMO) LFSRs are more evolved variants of LFSRs  wherein the ideas behind parallel and MBO LFSRs are combined \cite{datta2017design}. MIMO LFSRs replace the scalar coefficients of the generating polynomial in traditional LFSRs by vectors over the base field (GF(2) in this paper).
\end{comment}

In order to efficiently work with word based processors various word based LFSR designs have been proposed \cite{SNOW1-04,Dyn-Str-01,SNOW1-01,berbain2008sosemanuk,ekdahl2002new,ekdahl2019new,rose1999t}. These designs use multi input multi output delay blocks. One such design is the $\sigma$-LFSR shown in Figure \ref{SLFSR}. Here, the feedback gains are matrices and the implemented linear recurring relation is of the form
\begin{equation}
    \mathbf{x_{n+b}}  = \mathbf{x_{n+b-1}}B_{n+b-1} + \mathbf{x_{n+b-2}}B_{n+b-2} +\cdots + \mathbf{x_n}B_0
\end{equation}
where each $\mathbf{x_i} \in \mathbb{F}_2^m$ and $B_i \in \mathbb{F}_2^{m \times m}$. Here, each delay block has $m$-inputs and $m$-outputs and the $\sigma$-LFSR generates a sequence of vectors in $\mathbb{F}_2^m$ 
\begin{comment}
 $\sigma$-LFSRs are special MIMO LFSRs in which the the coefficients of the generating polynomial are matrices over the base field.  A $\sigma$-LFSR $S$ is formally defined as follows.

\begin{definition}\label{SIGLFSR}
	 Consider $B_0, B_1,\cdots,B_{b-1}\in \mathbb{F}^{m \times m}_2$ where $b$ is a positive integer. A $\sigma$-LFSR is a system with $b$,  $m$-input $m$-output delay blocks $D_0, D_1,\cdots,D_{b-1}$ which satisfies the feedback equation  $D^{t+1}_{b-1}=\sum_{j=0}^{b-1}B_j\times D^{t}_{j}$ where $D_r^t$ represents the content of $D_r$ at time $t$. The polynomial $f_{\mathcal{S}}(x)=x^n + B_{b-1}x^{n-1} + B_{b-2}x^{n-2} + \cdots + B_{0}$ is referred to as the generating $\sigma$-polynomial of $\mathcal{S}$\cite{zeng2007high}.
\end{definition}
 The $\sigma$-LFSR defined above is shown in figure \ref{SLFSR}.  
\end{comment}
\begin{figure}[H]
	\centering
	\includegraphics[width=0.95\linewidth]{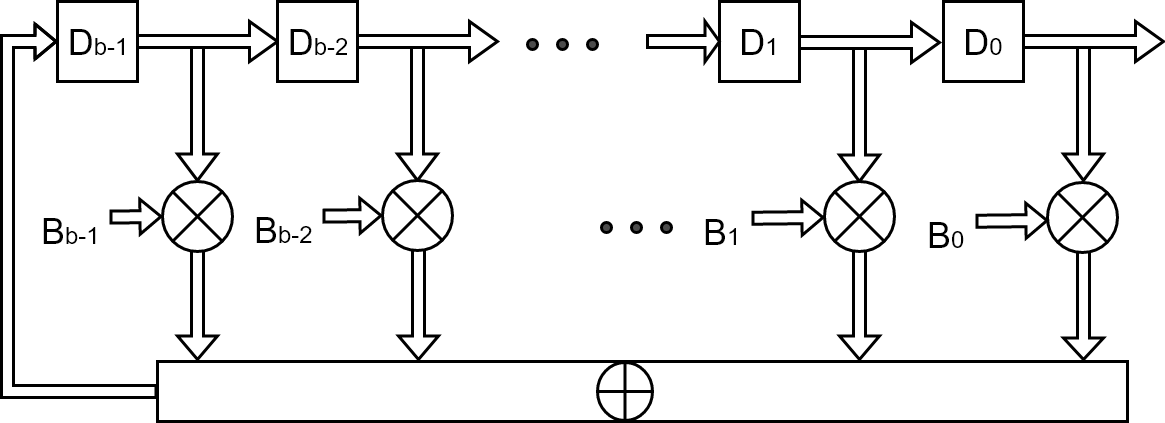}
	\caption{Block diagram of $\sigma$-LFSR}
	\label{SLFSR}
\end{figure}

The matrices $B_{0},B_{1},\cdots,B_{b-1}$  are referred to as the gain matrices of the $\sigma$-LFSR and the following matrix is defined as its configuration matrix.

\begin{equation}\label{SU}
C=
\begin{bmatrix}
0 & I & 0 & \cdots & 0 \\
0 & 0 & I & \cdots & 0 \\
\vdots & \vdots & \vdots & \cdots &\vdots\\
0 & 0 & 0 & \cdots & I\\
B_0 & B_1 & B_2 & \cdots & B_{b-1}
\end{bmatrix} \in \mathbb{F}_2^{mb \times mb}
\end{equation}
 where $0, I \in\mathbb{F}_2^{m\times m}$ are the all-zero and identity matrices respectively. We shall refer to the structure of this matrix as the $M$-companion structure. The characteristic polynomial of this  configuration matrix is known as the characteristic polynomial of the $\sigma$-LFSR. 
 
 If $\mathbf{x_n}$ is the output of the $\sigma$-LFSR at the $n$-th time instant, then its state vector at that time instant is defined as $\mathbf{\hat{x}_n} = [\mathbf{x_n},\mathbf{x_{n+1}},\ldots,\mathbf{x_{n+b-1}}]$. This vector is got by stacking the outputs of all the delay blocks at the $n$-th time instant. Two consecutive state vectors are related by the following equation:
 \begin{comment}
Let $D_{i}^t$  represent the output of the $i^{th}$ delay block in $\mathcal{S}$ at timestamp $t$ where  $i \in \left\{0, 1, \hdots, b-1\right\}$. The state vector  of the $\sigma$-LFSR $\mathcal{S}$ at timestamp $t$ (after the $t^{th}$ clock cycle) is defined as follows.
\begin{align*}
V_{\mathcal{S}}^t=\begin{Bmatrix}
D_0^t\\
D_1^t\\
\vdots\\
D_{b-1}^t\\
\end{Bmatrix}\in \mathbb{F}_2^{mb}
\end{align*}
\end{comment}
 
\begin{equation}
    \mathbf{\hat{x}_{n+1}}=\mathbf{\hat{x}_n}C^T
\end{equation}

In the case of $\sigma$-LFSRs,  there are many possible feedback configurations having the same characteristic polynomial. For a given primitive polynomial, the number of such  configurations was conjectured in \cite{zeng2007high} to be the following,
\begin{equation}\label{conj}
N_P=\frac{|GL(m,\mathbb{F}_{2})|}{2^{m}-1}\times \frac{\phi(2^{mb}-1)}{mb}\times 2^{m(m-1)(b-1)}
\end{equation}
where \ref{conj}, $GL(m,\mathbb{F}_{2})$ is the general linear group of non-singular matrices $\in \mathbb{F}_2^{m \times m}$, and $\phi$ represents Euler's totient.
This conjecture has been proved in \cite{krishnaswamy2014number}. Moreover, this inductive proof is constructive and gives an algorithm  for calculating such feedback functions. 

In the following section, we shall use the algorithm given in \cite{krishnaswamy2014number} to develop a key dependent feedback configuration for the $\sigma$-LFSR

\section{$\sigma$-KDFC}\label{Prop}
Stream ciphers, like the SNOW series of ciphers, use word based LFSRs along with an FSM module. The feedback configuration of the LFSR in such schemes is publicly known. This feedback relation is used in most attacks on such schemes\cite{billet2005resistance,nyberg2006improved,ahmadi2009heuristic,zhang2015fast}. Therefore, the security of such schemes could potentially increase if the feedback function is made key dependent.

Before proceeding to our construction of a key dependent feedback configuration, we briefly describe the algorithm given in \cite{krishnaswamy2014number} which generates feedback configurations for $\sigma$-LFSRs with a given characteristic polynomial. Given a primitive polynomial $p_{mb}(x)$ having degree $mb$, the algorithm for calculating a feedback configuration for  a $\sigma$-LFSR with $b$ $m$-input $m$-output delay blocks is as follows:

\begin{algorithm}[H] \label{HPillai}
\caption{Configuration Matrix Generation}
    \begin{enumerate}
        \item Initialize $Y$ with a full rank matrix in $\mathbb{F}_2^{m \times m}$.
        \item Choose $mb-m-1$ primitive polynomials $p_{m}(x),p_{m+1}(x),\ldots, p_{mb-1}(x)$ having degrees $m.m+1,\ldots,mb-1$ respectively. (Note that the polynomial $p_{mb}(x)$ is given). Let $A_m,A_{m+1},\ldots,A_{mb}$ be the companion matrices of $p_{m}(x),p_{m+1}(x),\ldots, p_{mb}(x)$ respectively.
        \item For $i= 1$ to $mb-m$ update $Y$ as follows
         \begin{enumerate}
             \item Let $\ell$ be the unique integer less or equal to $m$ which is equivalent to $i$ mod $m$. Find a polynomial $f(x)$ such that $Y[\ell \textrm{ mod }m,:]A_{m+i-1} = (0,0,\ldots,1) \in \mathbb{F}_2^{m+i-1}$ and update $Y$ as follows
             \begin{eqnarray*}
             Y = Y*A_{m+i-1}
             \end{eqnarray*}
             
             \item For all $t \neq \ell \textrm{ mod }m $, $Y[t,:] = (Y[t,:],d_t)$ where $d_t$ is an element of  $\mathbb{F}_2$.(In this step all the rows  of $Y$, except the  one which is equivalent to $i$ mod $m$, are appended with random boolean numbers and their lengths are increased by 1.)
             \item If $t = \ell$, $Y[t,:]= (0,0,\ldots,1) \in \mathbb{F}_2^{m+i}$. 
         \end{enumerate}
         (At the end of each iteration, an extra column is added to $Y$ till $Y \in \mathbb{F}_2^{m \times mb}$.)
         \item Construct the following matrix $Q$\\.
         $Q \gets \begin{bmatrix}
		Y[0:,]\\
		Y[1:,]\\
		\vdots\\
		Y[m-1:,]\\
		Y[0:,]\times P_{mb}\\
		Y[1:,] \times P_{mb}\\
		\vdots\\
		Y[m-1:,] \times P_{mb}\\
		\vdots\\
		Y[0:,]\times \left(P_{mb}\right)^{b-1}\\
		Y[1:,] \times \left(P_{mb}\right)^{b-1}\\
		\vdots\\
		Y[m-1:,] \times \left(P_{mb}\right)^{b-1}
		\end{bmatrix}\in \mathbb{F}_2^{mb \times mb}$
		\item $C = Q \times P_{mb} \times Q^{-1}$ 
    \end{enumerate}
\end{algorithm}

 The matrix $C$ generated in the above algorithm is the configuration matrix of a $\sigma$-LFSR with characteristic polynomial $p_{mb}(x)$. As can be seen from Equation \ref{SU},  the last $m$ rows of this matrix contain the feedback gain matrices.  Each set of choices for the $d_t$s and the initial full rank matrix  result in a different feedback configuration.
 
In Step 3a, the coefficients of the polynomial $f(x)$ can be calculated by solving the linear equation $y\times K_t = (0,0,\ldots,1) \in \mathbb{F}_2^{m+i-1}$ for $y$, where $K_i$ is given by

\begin{eqnarray*}
K_i = \left[\begin{matrix} Y[\ell,:]\\Y[\ell,:]A_{m+i-1}\\\vdots\\Y[\ell,:]A_{m+i-1}^{m+i-2}\end{matrix}\right]
\end{eqnarray*}
In other words $f(x) = y(1) + y(2)x+\cdots+y(m+i-1)x^{m+i-2}$, where $y(j)$ is the $j$-th entry of the vector $y$.

Note that in every iteration of Step 3 in Algorithm 1, $m-1$ random numbers are appended to the rows of the matrix $Y$. In the proposed scheme, some of these numbers are derived from the secret key. As a consequence, the derived feedback configuration  is dependent on the secret key. We now proceed to look at this configuration in detail.

\begin{comment}
The large number of primitive feedback configurations for $\sigma$-LFSRs makes this possible. $\sigma$-KDFC is a key dependent mechanism of generating a primitive feedback configuration for a $\sigma$-LFSR. Each time the secret key is changed the feedback configuration of the $\sigma$-LFSR is updated. The first thing the method needs to achieve this goal, is a key-dependent random number generation module. Then a systematic mechanism is required to generate  feedback configurations for the $\sigma$-LFSR using these random numbers. Finally, a configurable keystream generation module is required into which this $\sigma$-LFSR configuration can be deployed. Figure \ref{Sigma-KDFC} shows the schematic of a $\sigma$-KDFC and its interaction with the host stream cipher.

\end{comment}

\begin{figure}[H]
	\centering
	\includegraphics[width=0.9\linewidth]{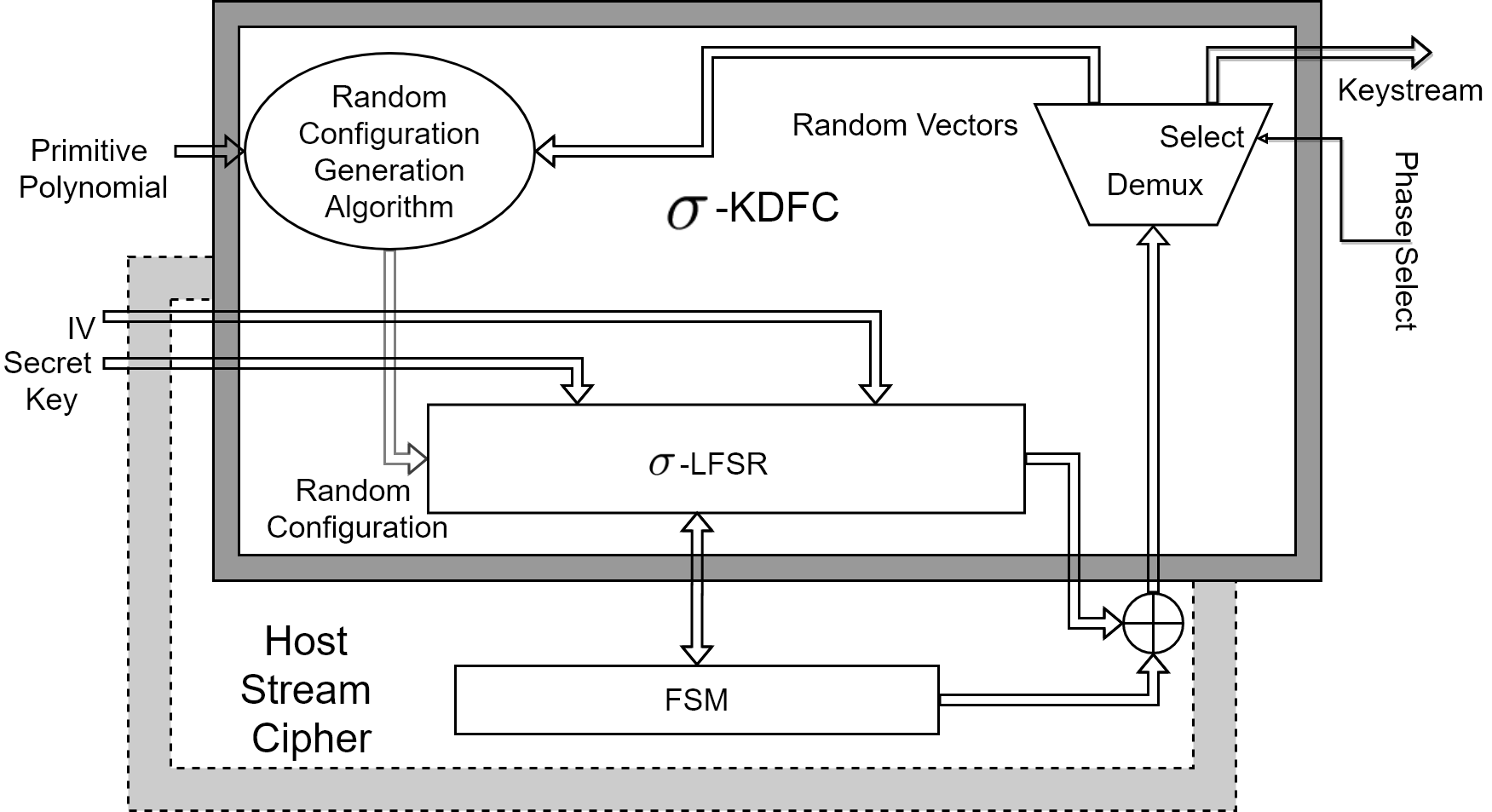}
	\caption{The Schematic of $\sigma$-KDFC}
	\label{Sigma-KDFC} 
\end{figure}
In order to create a keystream generator from the proposed $\sigma$-LFSR configuration, it can be connected to a Finite state machine which introduces non-linearity.
Figure \ref{Sigma-KDFC} shows the schematic of the proposed scheme along with its interconnection with an FSM. The scheme has an initialization phase wherein the feedback configuration of the $\sigma$-LFSR is calculated by running Algorithm 1. In order to reduce the time taken for initialization, Algorithm 1  is run offline (at some server) till $k$ iterations of step 3 and the resulting matrix $Y$ is made public. The number $k$ can be chosen depending on the computational capacity of the machine that hosts the $\sigma$-LFSR. The feedback configuration is calculated by running the remaining part of the algorithm in the initialization phase. In this phase there is no keystream generated at the output. The following subsection explains the initialization phase in detail.

\begin{comment}
 As shown in figure \ref{Sigma-KDFC}, a control line fed into a set of demultiplexers selects the running phase of the system. For example, the initialization and operation phases can be selected by setting $Phase Select=0$ and $Phase Select=1$ respectively. During the initialization phase, the demultiplexers forward their inputs to the configuration generation algorithm, which uses them as input random numbers. In the operation phase, the demultiplexers provide the output keystream of the modified cipher.
The phases of $\sigma$-KDFC are explained in the following.
\end{comment}

\subsection {The Initialization Phase}
During the initialization phase the $\sigma$-LFSR has a publicly known feedback configuration. Further, the pre-calculated matrix $Y \in \mathbb{F}_2^{m \times (m+k)}$, and the primitive polynomials $p_{m+k+1}(x),p_{m+k+2}(x),\ldots,p_{mb}(x)$ are also publicly known.  The initial state of the $\sigma$-LFSR is derived from the secret key and the IV. (as is normally done in word based stream ciphers like SNOW). The $\sigma$-LFSR is run along with the FSM for $mb-m-k$ clock cycles. This generates $mb-m-k$ vectors in $\mathbb{F}_2^m$. This corresponds to the $mb-m-k$ remaining iterations of Step 3 in Algorithm 1.  

The remaining part of Algorithm 1 is now run. In each iteration of Step 3, the boolean numbers appended to the rows of the matrix $Y$ in Step 3(b) are the entries of the the corresponding vector. More precisely, in the $i$-th iteration of Step 3 that is run on the keystream generator, for $t \neq i+k$ mod $m$, the $t$-th row of $Y$ is appended with the $t$-th entry of the $i$-th vector that was generated.

The feedback gains of the $\sigma$-LFSR are now set according to the configuration matrix that is generated by Algorithm 1.

Once the feedback gains are set the $\sigma$-LFSR is run along with the FSM. The first $b$ vectors are discarded and the keystream starts from the $b+1$-th vector. The reason for doing this is that the initial state of the $\sigma$-LFSR with the new configuration is generated by the publicly known feedback configuration which is used in the initialization process.

\subsubsection{Time Complexity of the Initialization Phase}

Step 3(a) in Algorithm 1 involves solving a system of linear equations in less than $mb$ variables. This can be done with a time complexity of $\mathcal{O}((mb)^3)$ using Gaussian elimination. The time complexity of Step 3(b) is linear in $m$ while that of Step 3(c) is constant. Further,  Step 3 has $mb-m-k$ iterations. Therefore, if $k$ is chosen such that $mb-m-k$ is $\mathcal{O}(1)$, then the overall time complexity of Step 3 is $\mathcal{O}((mb)^3)$.  In Step 5, the matrix $C$ can be calculated by solving the linear system of equations $CQ = QP_{mb}$ for $C$. This can be done in $\mathcal{O}((mb)^4)$ using Gaussian elimination. Step 4 has a time complexity of $\mathcal{O}((mb)^3)$. Thus, the time complexity of the initialization phase is $\mathcal{O}((mb)^4)$.
\begin{comment}
Algorithm \ref{UUU} is of time complexity $\mathcal{O}(n^3)$ as it uses Gaussian elimination of type LU(Lower-Upper Triangular) decomposition to solve the system of linear equations. On the other hand,  algorithm \ref{RRR} calls algorithm \ref{UUU} once in each of its $n-m$ iterations. Thus, the time complexity of algorithm \ref{RRR} is of order $\mathcal{O}(n^4)$.
%%%%%%%%%%%%%%%%%%%%%%%%%%%%%%%%%%%%%%%%%%%%% 
\subsubsection{Flexibility}
The structure of algorithm \ref{RRR} allows for a trade-off between time complexity on one hand and  randomness and algebraic resistance on the other. The total time taken to calculate the feedback configuration can be reduced by increasing the number of iterations of Algorithm 1 done in the server. This however reduces the number of possible configurations and hence compromises on the randomness.  

\subsection{The Operation Phase}
In this phase, the $\sigma$-LFSR with the updated feedback configuration is used along with the FSM to generated the keystream. Its configuration is updated by the mentioned algorithm each time the secret key is changed.
$\sigma$-KDFC does not change the FSM part of the host stream cipher.
\end{comment}
\section{Algebraic Analysis of $\sigma$-KDFC}\label{AlgRes}

The entries of the feedback matrices, $B_0,B_1,\ldots,B_{b-1}$, calculated by the procedure given in the previous section are functions of the matrix $Y$ generated in Step 3 of Algorithm 1. The entries of $Y$ are in turn non-linear functions of the initial state of the $\sigma$-LFSR.
 %In this section, we find the lower bound for the maximum degree among the entries of the feedback matrices as polynomials in the entries of the matrix $Y$.

Note that the last row of $Y$ is always $e_{1}^n$. Let the first $m-1$ rows of $Y$ be $v_1,v_2,\ldots,v_{m-1}$. Let $\mathcal{U}$ be the set of variables that denote the entries in these rows. Therefore,

\begin{equation}\label{Lin}
      B_{k}(i,j)=f_{k(i,j)}(\mathcal{U})  \textrm{ for } 0\leq k \leq b-1 \textrm{ and } 1\leq i,j \leq m
\end{equation}
 where $f_{k(i,j)}$s are polynomial functions.
 
The algebraic degree of the configuration matrix, denoted by $\Theta$, is defined as follows
\begin{equation}\label{Deg}
 \Theta\left(C_{\mathcal{S}}\right)= \max_{k,i,j} \left( |f_{k(i,j)}(\mathcal{U})|\right)
\end{equation}
$\Theta$ can be considered as a measure of the algebraic resistance of $\sigma$-KDFC.  We now proceed to find a lower bound for $\Theta$.
%%%%%%%%%%%%%%%%%%%%%%%%%%%%%%%%%%%%%%%%%%%%%%%%%%%%%%%%%%

%In this section we analyse the entries of the $B_i$s, generated in the proposed method, as boolean functions and give a lower bound on the maximum degree of these functions. We begin our analyses  making a few observations.

The matrix $Q$ generated in Step 4 of Algorithm 1 is given as follows 
\begin{eqnarray}
Q = \left[\begin{matrix}v_1\\v_2\\\vdots\\ v_{m-1}\\e_1^n\\v_1P_{mb}\\\vdots\\v_{m-1}P_{mb}\\e_1^nP_{mb}\\\vdots\\ v_1P_{mb}^{b-1}\\\vdots\\v_{m-1}P_{mb}^{b-1}\\e_1P_{mb}^{b-1}\\ \end{matrix}\right]
\end{eqnarray}
where $P_{mb}$ is the companion matrix of the publicly known primitive characteristic polynomial of the $\sigma$-LFSR.
The configuration matrix $C$ is generated by the formula $C = Q \times P_{mb} \times Q^{-1}$. Since $Q$ is an invertable boolean matrix, the determinant of $Q$ is always 1. Therefore, $Q^{-1}=Q^{(a)}$ where $Q^{(a)}$ is the adjugate of $Q$. Moreover, since the elements of $Q$ belong to $\mathbb{F}_2$, the co-factors are equal to the minors of $Q$. 
 The rows of $Q$ can be permuted to get the following matrix $Q_P$

\begin{equation}\label{adj}
    Q^P=\left[\begin{matrix}e_1^n\\e_1^nP_{mb}\\\vdots\\e_1P_{mb}^{b-1}\\v_1\\v_1P_{mb}\\\vdots\\ v_1P_{mb}^{b-1}\\\vdots\\ v_{m-1}\\v_{m-1}P_{mb}\\\vdots\\v_{m-1}P_{mb}^{b-1}\\ \end{matrix}\right]=\left(\begin{array}{cccccccc}
0 & 0 &\cdots & 0 & 0 & \cdots & 0 & 1 \\
0 & 0 & \cdots & 0 & 0 & \cdots & 1 & * \\
\vdots & \vdots & \cdots & \vdots&\vdots\\
0 & 0 & \cdots & 0 & 1 & * &  * \\
v_{1,1} & v_{1,2} & \cdots & v_{1,n-b}& v_{1,n-b+1} & \cdots &v_{1,n-1} &v_{1,n}\\
v_{1,2} & v_{1,3} & \cdots & v_{1,n-b+1}& v_{1,n-b+2} & \cdots &v_{1,n} & * \\
\vdots & \vdots & \cdots & \vdots&\vdots\\
v_{1,b} & v_{1,b+1} & \cdots & v_{1,n} & v_{1,n+1} & \cdots & *&* \\
v_{2,1} & v_{2,2} &\cdots& v_{2,n-b} & v_{2,n-b+1} &\cdots & v_{2,n-1} & v_{2,n}\\
v_{2,2} & v_{2,3} &\cdots&v_{2,n-b+1} & v_{2,n-b+2} &\cdots &v_{2,n}& *\\
\vdots & \vdots & \cdots & \vdots&\vdots\\
v_{2,b}&v_{2,b+1}&\cdots& v_{2,n} & v_{2,n+1}&\cdots &*&*\\
\vdots & \vdots & \cdots & \vdots&\vdots\\
v_{m-1,1} &v_{m-1,2} &\cdots & v_{m-1,n-b}& v_{m-1,n-b+1} &\cdots &v_{m-1,n-1}&v_{m-1,n}\\
v_{m-1,2} &v_{m-1,3} &\cdots & v_{m-1,n-b+1}& v_{m-1,n-b+2}& \cdots & v_{m-1,n} & *\\
\vdots & \vdots & \cdots & \vdots & \vdots \\
v_{m-1,b} & v_{m-1,b+1} & \cdots & v_{m-1,n} & v_{m-1,n+1} & \cdots & * & *
\end{array}\right)
\end{equation}	 
where $*$s are linear combinations of the entries of the previous row. Note that $Q^{-1}$ can be got by permuting the rows of $Q_P^{-1}$

 The matrix $Q_P$  can be decomposed as follows into four sub-matrices $Q_1,Q_2,Q_3$ and $Q_4$:
\begin{comment}
\begin{itemize}
    \item $Q_1=\left[Q_P[i,j]\right]|1\leq i \leq b,\ 1 \leq j \leq n-b$.
    \item $Q_2=\left[Q_P[i,j]\right]| 1 \leq i\leq b,\ n-b+1 \leq j \leq n$.
    \item $Q_3=\left[Q_P[i,j]\right]|b+1 \leq i\leq n,\ 1 \leq j \leq n-b$.
    \item $Q_4=\left[Q_P[i,j]| b+1 \leq i\leq n,\ n-b+1 \leq j \leq n\right]$.
\end{itemize}

\begin{equation}\label{Decomp}
Q_P=\left(\begin{array}{ccc}
     [Q_1]_{(b) \times (n-b)}& \vline & [Q_2]_{(b) \times (b)} \\
     \hline
     [Q_3]_{(n-b)\times(n-b)}& \vline & [Q_4]_{(n-b) \times (b)}
\end{array}\right)
\end{equation}
\end{comment}

\begin{equation}\label{Rew}
Q_P =\\
\begin{array}{c|c}
Q_1=\begin{pmatrix}
	0 & 0 &\cdots & 0   \\
	0 & 0 & \cdots & 0   \\
	\vdots & \vdots & \cdots& 	\vdots  \\
	0 & 0 & \cdots & 0  \\
\end{pmatrix}_{b \times n-b}
&
 Q_2=\begin{pmatrix}
0 & \cdots & 0 & 1\\
0 & \cdots & 1 & *\\
\vdots&\vdots\\
 1 & \cdots &* &  *
\end{pmatrix}_{b \times b}\\
\hline
  Q_3=\begin{pmatrix}
	v_{1,1} & v_{1,2} & \cdots & v_{1,mb-b}\\
	v_{1,2} & v_{1,3} & \cdots & v_{1,mb-b+1}\\
	\vdots & \vdots & \cdots \\
	v_{1,b} & v_{1,b+1} & \cdots & v_{1,mb}\\
	v_{2,1} & v_{2,2} &\cdots& v_{2,mb-b}\\
	v_{2,2} & v_{2,3} &\cdots&v_{2,mb-b+1}\\
	\vdots & \vdots & \cdots &	\vdots\\
	v_{2,b}&v_{2,b+1}&\cdots& v_{2,mb}\\
	\vdots & \vdots & \cdots & 	\vdots\\
	v_{m-1,1} &v_{m-1,2} &\cdots & v_{m-1,mb-b}\\
	v_{m-1,2} &v_{m-1,3} &\cdots & v_{m-1,mb-b+1}\\
	\vdots & \vdots & \cdots & 	\vdots\\
	v_{m-1,b} &v_{m-1,b+1} &\cdots & v_{m-1,mb}
\end{pmatrix}_{mb-b \times mb-b}
&
 Q_4=\begin{pmatrix}
 v_{1,mb-b+1} & \cdots &v_{1,mb-1} &v_{1,mb}\\
v_{1,mb-b+2} & \cdots &v_{1,mb} & * \\
\vdots&\vdots & \cdots \vdots\\
v_{1,mb+1} & \cdots & *&* \\
v_{2,mb-b+1} &\cdots & v_{2,mb-1} & v_{2,mb}\\
v_{2,mb-b+2} &\cdots &v_{2,mb}& *\\
\vdots&\vdots & \cdots & \vdots \\
v_{2,mb+1}&\cdots &*&*\\
 \vdots&\vdots& \cdots & \vdots\\
v_{m-1,mb-b+1} &\cdots &v_{m-1,mb-1}&v_{m-1,mb}\\
v_{m-1,mb-b+2}& \cdots & v_{m-1,mb} & *\\
\vdots & \vdots & \cdots & \vdots\\
v_{1,mb+1} & \cdots &*&*
\end{pmatrix}_{mb-b \times b}
\end{array}
\end{equation}

Since $Q_P$ is invertible, $det(Q_P)= det(Q_3) =1$.

Let  $\Gamma_k$ be the set of of polynomial functions of $\mathcal{U}$ variables with  degree  $k$. We now proceed to analyse some of the minors of $ Q_P$. 
\begin{lemma}\label{PUP}
For $1 \leq j \leq (mb-b)$, $\mu(Q_P[b,j]))\in    \Gamma_{mb-b} $
\end{lemma}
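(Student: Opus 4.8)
My plan is to collapse the minor $\mu(Q_P[b,j])$ down to the determinant of a single block built only from the $v$-rows, and then read off its degree from that block.

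First I would exploit the zero pattern of the top $b$ rows of $Q_P$. By the structure exhibited in \eqref{adj} and \eqref{Rew}, the rows $e_1^n, e_1^nP_{mb},\ldots,e_1^nP_{mb}^{b-1}$ vanish on the first $mb-b$ columns (this is the block $Q_1=0$) and, on the last $b$ columns, form the anti-triangular block $Q_2$ whose nonzero entries are the $1$'s on the anti-diagonal together with $*$-entries strictly above it. In particular the pivot $1$ of row $b$ (the row $e_1^nP_{mb}^{b-1}$) sits in column $mb-b+1$, while rows $1,\ldots,b-1$ have their pivots in columns $mb-b+2,\ldots,mb$. Hence, after deleting row $b$ and a column $j$ with $1\le j\le mb-b$, the remaining $b-1$ top rows are supported only on the last $b-1$ columns.

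Next I would move those last $b-1$ columns to the right, so that the deleted matrix takes the anti-block-triangular form $\left(\begin{smallmatrix} 0 & T'\\ M_1 & M_2\end{smallmatrix}\right)$, where $T'$ is the $(b-1)\times(b-1)$ constant block carrying $1$'s on its anti-diagonal (so $\det T'=1$ over $\mathbb{F}_2$), and $M_1$ is the $(mb-b)\times(mb-b)$ matrix obtained by restricting the $v$-rows $v_iP_{mb}^{s}$ ($1\le i\le m-1$, $0\le s\le b-1$) to the column set $\{1,\ldots,mb-b+1\}\setminus\{j\}$. Expanding this determinant gives $\mu(Q_P[b,j])=\det(T')\det(M_1)=\det(M_1)$ over $\mathbb{F}_2$, so $M_2$ never enters. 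A key bookkeeping point is that every surviving column index $c\le mb-b+1$ and shift $s\le b-1$ satisfy $c+s\le mb$, so each entry of $M_1$ is the genuine $\mathcal U$-variable $v_{i,c+s}$ (no characteristic-polynomial recurrence is triggered, unlike in the later columns of $Q_4$); thus every entry of $M_1$ is a homogeneous linear form in $\mathcal U$.

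It then remains to determine the degree of $\det(M_1)$. Every term of its Leibniz expansion is a product of $mb-b$ linear forms, so the ANF degree is at most $mb-b$, already placing $\mu(Q_P[b,j])$ in degree $\le mb-b$. To pin it at exactly $mb-b$ I would exhibit one squarefree monomial of degree $mb-b$ that survives cancellation and Boolean reduction: specialising the free entries of the $v_i$ to impulses turns each Hankel-type block into an anti-diagonal of $1$'s, and when the resulting length-$b$ column windows tile $\{1,\ldots,mb-b+1\}\setminus\{j\}$, the matrix $M_1$ becomes a permutation matrix, giving $\det(M_1)=1$ and a surviving squarefree transversal. I expect this exact-degree (non-vanishing) step to be the main obstacle: the impulse tiling works cleanly only when $j\equiv 1\pmod b$, because otherwise the hole at column $j$ fails to align with block boundaries and a pure-impulse assignment leaves an all-zero row. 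Handling an arbitrary deleted column $j$ requires a more careful choice of the $v_i$, i.e.\ a system of distinct representatives for the stacked Hankel blocks whose associated monomial is squarefree and uncancelled; this combinatorial matching argument, rather than the block reduction or the degree count (both routine), is where the real work lies.
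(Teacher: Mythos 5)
Your block reduction is sound and is essentially the reduction the paper itself makes: after deleting row $b$ and a column $j\le mb-b$, the surviving top $b-1$ rows are supported on the last $b-1$ columns and contribute a constant factor $\det(T')=1$, so $\mu(Q_P[b,j])=\det(M_1)$, where your $M_1$ is exactly the paper's matrix $[Q_3Q_4]_{j,1}$ (delete column $j$ of $Q_3$, append the first column of $Q_4$). The upper bound $\deg\le mb-b$ is likewise fine. The genuine gap is the exact-degree step, and you flag it yourself: the impulse specialisation certifies a surviving degree-$(mb-b)$ monomial only when the hole at column $j$ aligns with the block boundaries, and for general $j$ you defer to an unconstructed ``system of distinct representatives'' argument. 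As written, the lemma is therefore established only for a subset of the columns $j$ that it claims.

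The missing step has a short symbolic fix, which is what the paper does: examine the diagonal of $M_1$ rather than specialising the $v_i$. The row of $M_1$ coming from $v_iP_{mb}^{s}$ has $t$-th entry $v_{i,\sigma(t)+s}$, where $\sigma(t)=t$ for columns before the deleted one and $\sigma(t)=t+1$ after it; its diagonal entry is $v_{i,(i-1)b+2s+1+\delta}$ with $\delta\in\{0,1\}$ nondecreasing in the row index. Consecutive diagonal indices within a block thus differ by at least $1$, so the $mb-b$ diagonal entries are pairwise distinct variables for \emph{every} $j$ --- the alignment problem that obstructs your tiling never arises. Hence the identity permutation in the Leibniz expansion contributes a squarefree monomial of degree $mb-b$, and (as the paper asserts) no other permutation selects that same multiset of variables, so this monomial survives cancellation and Boolean reduction. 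This closes the lemma with no case split on $j$ and no choice of specific values for the $v_i$.
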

\begin{proof}
    For two matrices $A$ and $B$ with the same number of rows, let $[AB]_{p,q}$ be the matrix which is got by removing the $p^{th}$ column from $A$ and appending the $q^{th}$ column of $B$ to $A$. For $i=b$ and $1 \leq j \leq (mb-b)$, $\mu(Q_P[i,j]))$ is given by:
   \begin{equation}\label{Det}
    \mu(Q_P[i,j])=det([Q_3Q_4]_{j,1})
    \end{equation}
    Recall that, for a binary matrix $M \in \mathbb{F}_2^{mb \times mb}$, its determinant is given by the following formula,
    \begin{equation}
    det(M) = \sum_{f \in S_{mb}} \prod_{1 \leq i \leq n}M(i,f(i))
    \end{equation}
    where $S_{mb}$ is the set of permutations on $(1,2,\ldots,mb)$.
    Observe that the diagonal elements of $([Q_3Q_4])_{j,1}$ are distinct $v_{i,k}$s. Their product corresponds to the identity permutation in the determinant expansion formula for $[Q_3Q_4])_{j,1}$. The resultant monomial has degree $mb-b$.
    Further, this monomial will not occur as a result of any other permutation. Hence $det([Q_3Q_4]_{j,1})$ is always a polynomial of degree $mb-b$.
    
    \begin{comment}
For example, consider the minor of  $Q_P(n-b,1)$. This is equal to the determinant of the following matrix  $([Q_3Q_4])_{1,1}$. 
\begin{equation}\label{Q111}
 [Q_3Q_4])_{1,1}=\left(\begin{array}{rrrrr}
v_{1,2} & v_{1,3} & \cdots &v_{1,n-b} &v_{1,n-b+1}\\
v_{1,3} & v_{1,4} & \cdots &v_{1,n-b+1}& v_{1,n-b+2} \\
\vdots & \vdots & \cdots & \vdots&\vdots\\
v_{1,b+1} & v_{1,b+2} & \cdots &v_{1,n-1}&v_{1,n} \\
v_{2,2} & v_{2,3} &\cdots & v_{2,n-b}&v_{2,n-b+1}\\
v_{2,3} & v_{2,4} &\cdots & v_{2,n-b+1}&v_{2,n-b+2} \\
\vdots & \vdots & \cdots & \vdots&\vdots\\
v_{2,b+1}&v_{2,b+2}&\cdots&v_{2,n-1}&v_{2,n}\\
\vdots & \vdots & \cdots & \vdots&\vdots\\
v_{m-1,2} &v_{m-1,3} &\cdots &v_{m-1,n-b}&v_{m-1,n-b+1}\\
v_{m-1,3} &v_{m-1,4} &\cdots &v_{m-1,n-b+1}& v_{m-1,n-b+2}\\
\vdots & \vdots & \cdots & \vdots&\vdots\\
v_{m-1,b+1} &v_{m-1,b+2} &\cdots &v_{m-1,n-1}& v_{m-1,n}
\end{array}\right)_{n-b \times n-b}
\end{equation}
In the polynomial representing $Q_{1}[n-b,1]$, one of the the maximum-degree monomial consists of diagonal elements which are all different unknown variables. besides, the inverse of identity permutation in a symmetric group is the same as the identity permutation. So, the diagonal elements will not be eliminated by other permutations in the determinant expansion formula.
It is applicable for all minors of $Q1[b]$.
   This is why all the minors of $Q1[b]\in \Gamma_{n-b}$.
   \end{comment}
   \end{proof}

 \begin{lemma}\label{PUP1}
 If $1\leq i \leq b$ then
\begin{equation}\label{MinQ2}
     \mu(Q_P[i,j])=
    \begin{cases}
        det(Q_3)&i+j=mb+1\\
        0& i+j >= mb+1
    \end{cases}
\end{equation}
\end{lemma}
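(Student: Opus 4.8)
The plan is to exploit the block decomposition $Q_P = \begin{pmatrix} Q_1 & Q_2 \\ Q_3 & Q_4 \end{pmatrix}$ from Equation \ref{Rew}, using crucially that $Q_1$ is the $b \times (mb-b)$ zero block and that $Q_2$ is $b \times b$ anti-lower-triangular with $1$'s on the anti-diagonal (row $r$ has its leading $1$ at column $b-r+1$, zeros to its left, and arbitrary entries to its right). I would first record a boundary observation: the hypotheses $i+j \ge mb+1$ with $1 \le i \le b$ and $j \le mb$ force $j > mb-b$, so the deleted column always lies in the right block occupied by $Q_2$ and $Q_4$; in particular no column of $Q_3$ is ever removed. (I read the two listed cases as $i+j=mb+1$ and $i+j>mb+1$, the latter being the evident intent of the second line.)

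First I would establish a factorization of the minor. After deleting the top row $i$ (a top row since $1 \le i \le b$) and a right-block column $j$, the surviving matrix still contains all $mb-b$ columns of the left block and all $mb-b$ rows of the bottom block $[Q_3\,Q_4]$, while every top row is identically zero on the left block. Expanding the determinant as a sum over permutations, each left-block column must be matched to a bottom row, since the top rows contribute zero there; as there are exactly $mb-b$ bottom rows and $mb-b$ left columns, every nonzero term assigns the bottom rows bijectively onto the left columns — yielding the factor $\det(Q_3)$ — and assigns the $b-1$ surviving top rows onto the $b-1$ remaining right-block columns. Working over $\mathbb{F}_2$ signs are irrelevant, so $\mu(Q_P[i,j]) = \det(Q_3)\cdot m_{ij}$, where $m_{ij}$ is the minor of $Q_2$ obtained by deleting its row $i$ and its column $c_0 = j-(mb-b)$.

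It then remains to evaluate $m_{ij}$ from the anti-triangular shape of $Q_2$. If $i+j = mb+1$ then $c_0 = b-i+1$ is exactly the anti-diagonal position of the deleted row, and a short induction (forcing row $1$ onto its only available column $b$, then row $2$ onto column $b-1$, and so on) shows the surviving rows are pinned to their anti-diagonal $1$'s, so $m_{ij}=1$ and $\mu(Q_P[i,j]) = \det(Q_3)$. If instead $i+j>mb+1$ — which forces $i \ge 2$ because $j \le mb$ — then $c_0 > b-i+1$, so the deleted column sits strictly to the right of the anti-diagonal entry of row $i$. Here I would use a support/rank argument: the $i-1$ rows $1,\dots,i-1$ of $Q_2$ have all their nonzero entries confined to the $i-1$ columns $\{b-i+2,\dots,b\}$, and since $c_0$ lies in this set, after its removal these $i-1$ rows are supported on only $i-2$ columns. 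They are therefore linearly dependent in the reduced block, forcing $m_{ij}=0$ and hence $\mu(Q_P[i,j])=0$.

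I expect the main obstacle to be the vanishing case $i+j>mb+1$. The factorization step is clean once one sees that the zero block $Q_1$ pins all bottom rows to the left columns, but the second case rests on the pigeonhole observation that $i-1$ rows become trapped in $i-2$ columns once $c_0$ is deleted; getting that counting precise, and checking that the degenerate boundary $i=1$ (where $i+j>mb+1$ is vacuous and only the first case survives) causes no trouble, is where I would spend the most care.
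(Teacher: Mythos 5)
Your proof is correct and follows essentially the same route as the paper: both exploit the block structure of $Q_P$ (zero block $Q_1$, anti-triangular $Q_2$) to reduce the minor to $\det(Q_3)$ times a minor of $Q_2$, identify the anti-diagonal case, and kill the below-anti-diagonal case by a rank/linear-dependence argument on the first rows. Your write-up is in fact more careful than the paper's — in particular the pigeonhole count showing $i-1$ surviving top rows trapped in $i-2$ columns makes precise the paper's bare assertion that "the first $b-1$ rows of the resulting matrix are always rank deficient," and you correctly flag the typo in the second case of the statement ($i+j \ge mb+1$ should read $i+j > mb+1$).
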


    \begin{proof}
    
    Observe that, for $1\leq i \leq b$ and $i+j = mb+1$, the $Q_P[i,j]$s are the anti-diagonal elements of $Q_2$. Clearly, the minors of these elements are all equal to the determinant of $Q_3$. As we have already seen, the invertibility of $Q_P$ implies that this determinant is always $1$. Therefore, $\mu(Q_P[i,j]) = 1$ when $i+j = mb+1$
   
    Note that, for $1\leq i \leq b$ and  $i+j > mb+1 $, the $Q_P[i,j]$s are the elements of $Q_2$ that are below the anti-diagonal. Observe that, if the row and column corresponding to such an element are removed from $Q_P$, then the first $b-1$ rows of the resulting matrix are always rank deficient. Therefore, the determinant of this matrix is always 0. Therefore, $\mu(Q_P[i,j]) = 0$.
 \end{proof}
    
\begin{lemma}\label{PUP2}
 If $b+1 \leq i \leq mb$ and $1 \leq j \leq n-b$, then $\mu(Q_P[i,j]) \in \Gamma_{mb-b-1}$.
\end{lemma}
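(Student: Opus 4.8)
The plan is to delete row $i$ and column $j$ from $Q_P$, use the zero block $Q_1$ to reduce the minor to a determinant involving only the $v$-variables, and then locate a surviving top-degree monomial as in Lemma~\ref{PUP}; throughout, recall $n=mb$. Since $b+1\le i\le mb$ indexes a row of the block $[\,Q_3\ Q_4\,]$ and $1\le j\le n-b$ indexes a column of the block $\left[\begin{smallmatrix}Q_1\\ Q_3\end{smallmatrix}\right]$, deleting this row and column from $Q_P$ produces, in the same row/column grouping, the $(n-1)\times(n-1)$ matrix
\[
M=\begin{pmatrix}\mathbf 0 & Q_2\\ \widetilde Q_3 & \widetilde Q_4\end{pmatrix},
\]
where $\mathbf 0$ is the $b\times(n-b-1)$ zero block (the top $b$ rows still vanish on every surviving left column), $Q_2$ is the unchanged $b\times b$ anti-triangular block, $\widetilde Q_3$ is the $(n-b-1)\times(n-b-1)$ matrix obtained from $Q_3$ by deleting the row corresponding to $i$ and the column $j$, and $\widetilde Q_4$ is $Q_4$ with that row deleted.

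First I would collapse $M$ to a single determinant. Because its top-left block is identically zero while the top-right block $Q_2$ and the bottom-left block $\widetilde Q_3$ are square of the complementary sizes $b$ and $n-b-1$, every nonzero term in the permutation expansion must send the $b$ top rows bijectively onto the $b$ columns of $Q_2$ and the remaining rows onto the columns of $\widetilde Q_3$; equivalently, swapping the two column groups makes $M$ block-triangular. Hence $\mu(Q_P[i,j])=\det(Q_2)\,\det(\widetilde Q_3)$ up to a sign that is irrelevant over $\mathbb F_2$, and since $Q_2$ has $1$'s on its anti-diagonal and zeros above it, $\det(Q_2)=1$. Thus $\mu(Q_P[i,j])=\det(\widetilde Q_3)$.

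Next I would bound the degree. Within the first $n-b$ columns the companion-shift rows $v_\ell P_{mb}^{s}$ contribute only genuine coordinates $v_{\ell,c+s}$ with $c+s\le n-1$, so no feedback ($*$) term ever appears in $Q_3$; every entry of $\widetilde Q_3$ is therefore a single variable of degree one, giving $\deg\det(\widetilde Q_3)\le n-b-1$. For the matching lower bound I would argue exactly as in Lemma~\ref{PUP}: a suitable permutation of $\widetilde Q_3$ selects pairwise distinct variables $v_{\ell,k}$ whose product is a squarefree monomial of degree $n-b-1$; this monomial arises from no other permutation, so over $\mathbb F_2$ it cannot cancel and survives in $\det(\widetilde Q_3)$. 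Therefore $\mu(Q_P[i,j])=\det(\widetilde Q_3)\in\Gamma_{n-b-1}=\Gamma_{mb-b-1}$.

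The step I expect to be the main obstacle is this last one: certifying, for every admissible pair $(i,j)$, that after deleting an arbitrary row and column the Hankel-type block $\widetilde Q_3$ still admits a perfect matching on distinct variables, so that its top-degree part is nonzero and the degree drops by exactly one rather than further. When $i$ and $j$ occupy aligned diagonal positions this is immediate from the distinct-diagonal structure of $Q_3$; in the remaining cases one must produce the matching by an augmenting-path adjustment of the diagonal, and this is the only place where the index ranges $b+1\le i\le mb$ and $1\le j\le n-b$ are genuinely used.
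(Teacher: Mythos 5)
Your route is the same as the paper's: reduce $\mu(Q_P[i,j])$ to the determinant of the submatrix of $Q_3$ obtained by deleting the corresponding row and column (the paper states this reduction without writing out the block factorization $\det(Q_2)\det(\widetilde Q_3)$ that you make explicit), and then exhibit a surviving degree-$(mb-b-1)$ monomial coming from the diagonal. The one step you flag as an obstacle is exactly the step the paper asserts in one line ("the diagonal elements of such a sub-matrix are distinct $v_{i,j}$s"), and it is not actually an obstacle: no augmenting-path repair is needed. If the deleted row of $Q_3$ is $i_0=i-b$ and the deleted column is $j$, the $t$-th diagonal entry of $\widetilde Q_3$ sits at position $(k_t,k_t+\delta_t)$ of $Q_3$ with $\delta_t\in\{0,1\}$ when $j<i_0$ and $\delta_t\in\{-1,0\}$ when $j>i_0$; by the Hankel structure this entry is $v_{\ell,(\ell-1)b+2r-1+\delta_t}$ for local row $r$, and since distinct rows give second indices differing by at least $2$ while $|\delta_t-\delta_{t'}|\le 1$, all diagonal entries are distinct variables. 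So the identity permutation of $\widetilde Q_3$ already yields the required squarefree monomial of degree $mb-b-1$, and your argument closes as written.
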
 
\begin{proof}    
   Observe that the elements of $Q_P$ considered in this lemma are elements of the sub-matrix $Q_3$. Therefore, $\mu(Q_P[i,j])$, for the range of $i$ and $j$ considered, is nothing but the determinant of the sub-matrix of $Q_3$  got by deleting the $i^{th}$ row and $j^{th}$ column of $Q_3$.
 The diagonal elements of such a sub-matrix are distinct $v_{i,j}s$. Their product will result in a monomial of degree $mb-b-1$. This corresponds to the identity permutation in the determinant expansion formula given by Equation \ref{Det}. Observe that no other permutation generates this monomial. Hence, the minor will always have a monomial of degree $mb-b-1$ . Therefore,  $\mu(Q_P[i,j] \in \Gamma_{mb-b-1}$.
\end{proof}

\begin{lemma}\label{PUP3}
If $b+1\leq i \leq n$ and $mb-b+1 \leq j \leq mb$, then $\mu(Q_P[i,j])=0$.
\end{lemma}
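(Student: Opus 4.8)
My plan is to exploit the block structure of $Q_P$ exhibited in Equation~\ref{Rew}, in the same spirit as the second case of Lemma~\ref{PUP1}, but now deleting a row from the \emph{bottom} block rather than the top one. Fix $i$ and $j$ in the stated ranges, so that $Q_P[i,j]$ is an entry of the sub-matrix $Q_4$. By definition, $\mu(Q_P[i,j])$ is the determinant of the $(mb-1)\times(mb-1)$ matrix obtained by deleting row $i$ and column $j$ from $Q_P$.

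First I would record precisely which blocks are touched by these deletions. Since $b+1\le i\le mb$, the deleted row lies in the lower block of $Q_P$ (the one containing $Q_3$ and $Q_4$); hence the top $b$ rows of $Q_P$ survive untouched. Since $mb-b+1\le j\le mb$, the deleted column lies in the right block (the one containing $Q_2$ and $Q_4$); hence none of the first $mb-b$ columns is removed. Consequently, the top-left $b\times(mb-b)$ all-zero block $Q_1$ is preserved in its entirety inside the reduced matrix.

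The crux is then a dimension count. In the reduced matrix the top $b$ rows still have all of their first $mb-b$ entries equal to zero, so each of these $b$ rows has its support confined to the columns coming from the right block; but after deleting column $j$ that block contributes only $b-1$ columns. Thus the top $b$ rows are vectors lying in a coordinate subspace of dimension $b-1$, and $b$ vectors confined to a $(b-1)$-dimensional space are necessarily linearly dependent.

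Finally, a matrix with linearly dependent rows is singular, so its determinant vanishes; therefore $\mu(Q_P[i,j])=0$, which is exactly the claim. I do not expect any genuine obstacle here: once the block bookkeeping is carried out correctly, the result is an immediate pigeonhole/rank argument, and the only point demanding care is to verify that \emph{neither} deletion disturbs the zero block $Q_1$, so that the counting of $b$ rows into $b-1$ available columns is legitimate.
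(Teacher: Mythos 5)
Your argument is correct and is essentially the paper's own proof: both observe that deleting a row from the lower block and a column from the right block leaves the zero block $Q_1$ intact, so the surviving $b$ top rows are supported on only $b-1$ columns and are therefore linearly dependent, forcing the minor to vanish. Your version merely makes the dimension count explicit where the paper states the rank deficiency of the first $b$ rows directly.
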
 
 \begin{proof}
 The elements of $Q_P $ considered in this lemma are elements of the submatrix $Q_4$. Whenever the row and column corresponding to such an element is removed from $Q_P $, the rows of the submatrix $Q_2$ become linearly dependent. Therefore, the first $b$ rows of the resultant matrix are always rank deficient. Consequently, $\mu(Q_P[i,j])=0$. 
    
\end{proof}

For a given matrix $A$ with polynomial entries, let $\Theta(A)$ be the maximum degree among all the  entries of $A$. As there are $mb-b$ rows in $Q_P$ with variable entries, $\Theta(Q_P^{-1})\leq mb-b$. Therefore, we get the following as a consequence of Lemma \ref{PUP}.

\begin{equation}
   \Theta(Q^{-1}) =  \Theta(Q_P^{-1}) = mb-b
\end{equation}

\begin{comment}
\begin{equation}\label{MinDec}
Q^{-1}=\left(\begin{array}{ccc}
     [Q_{1c}^T]_{(n-b) \times (b)}& \vline & [Q_{3c}^T]_{(n-b)\times(n-b)} \\
     \hline
     [Q_{2c}^T]_{(b) \times (b)}& \vline & [Q_{4c}^T]_{(b) \times (n-b)}
\end{array}\right)
\end{equation}
\end{comment}

Recall that the configuration matrix $\mathcal{C}$ is given by $QP_{mb}Q^{-1}$. We now use the above developed machinery to calculate $\Theta(\mathcal{C})$.

\begin{theorem}\label{IC}
 $\Theta(\mathcal C) \ge mb-b$ . 
\end{theorem}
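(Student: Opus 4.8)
The plan is to exhibit a single entry of $\mathcal C$ whose degree is exactly $mb-b$; since $\Theta(\mathcal C)$ is the maximum degree over all entries of $\mathcal C$, producing one such entry immediately gives the bound. The first move is to pass from $\mathcal C = Q P_{mb} Q^{-1}$ to the permuted matrix $\mathcal C_P := Q_P P_{mb} Q_P^{-1}$. Writing $Q_P = \Pi Q$ for the permutation matrix $\Pi$ that reorders the rows as in Equation \ref{adj}, we have $\mathcal C_P = \Pi\,\mathcal C\,\Pi^{-1}$, so $\mathcal C_P$ is obtained from $\mathcal C$ merely by permuting rows and columns. This does not change the multiset of entries, hence $\Theta(\mathcal C)=\Theta(\mathcal C_P)$, and it suffices to locate a high-degree entry of $\mathcal C_P$.

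Next I would single out the $b$-th row of $\mathcal C_P$. Since the $b$-th row of $Q_P$ is $e_1^n P_{mb}^{b-1}$, this row of $\mathcal C_P$ equals $(e_1^n P_{mb}^{b-1})P_{mb}\,Q_P^{-1} = w\,Q_P^{-1}$, where $w := e_1^n P_{mb}^{b}$ is a \emph{constant} vector. The crucial observation is the support of $w$: multiplication by the companion matrix $P_{mb}$ shifts the leading $1$ of $e_1^n=(0,\dots,0,1)$ one position to the left at each step (exactly as displayed in the upper rows of $Q_P$ in Equation \ref{adj}). Hence after $b$ multiplications $w$ has a $1$ in coordinate $mb-b$, zeros in coordinates $1,\dots,mb-b-1$, and constants thereafter.

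I would then read off the $(b,b)$ entry of $\mathcal C_P$. Using $Q_P^{-1}[l,k]=\mu(Q_P[k,l])$ (the cofactor equals the minor over $\mathbb F_2$), we get $\mathcal C_P[b,b]=\sum_{l} w_l\,\mu(Q_P[b,l])$. Because $w_l=0$ for $l<mb-b$, no term of index $l<mb-b$ contributes; the term $l=mb-b$ survives with coefficient $w_{mb-b}=1$ and, by Lemma \ref{PUP}, equals a minor in $\Gamma_{mb-b}$; and every term with $l>mb-b$ is constant, since by Lemma \ref{PUP1} $\mu(Q_P[b,mb-b+1])=\det(Q_3)=1$ and $\mu(Q_P[b,l])=0$ for $l>mb-b+1$. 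Therefore $\mathcal C_P[b,b]=\mu(Q_P[b,mb-b])+\mathrm{const}$ has degree exactly $mb-b$, and we conclude $\Theta(\mathcal C)=\Theta(\mathcal C_P)\ge mb-b$.

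The subtle point, and the step I expect to be the main obstacle, is ruling out cancellation of the top-degree monomials. A priori the sum $\sum_l w_l\,\mu(Q_P[b,l])$ could mix several minors, each of degree $mb-b$, whose leading monomials might annihilate one another over $\mathbb F_2$. What makes the argument go through is precisely that the leading $1$ of $w$ sits at coordinate $mb-b$ with $w$ vanishing to its left: this isolates a single degree-$(mb-b)$ minor in the $(b,b)$ entry, so no cancellation can occur. Verifying the support of $w$ is a short induction on the powers of the companion matrix, valid as long as $mb-b\ge 1$ (i.e. $m\ge 2$, the genuinely word-oriented case), and is the only computation that needs care.
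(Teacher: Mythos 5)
Your proof is correct and follows essentially the same route as the paper's: it targets the same entry of the configuration matrix (the product of the constant row $e_1^nP_{mb}^{b}$ of $QP_{mb}$ with the $b$-th column of $Q_P^{-1}$) and invokes Lemmas \ref{PUP} and \ref{PUP1} in exactly the way the paper does to identify that column as $(P_1,\ldots,P_{mb-b},1,0,\ldots,0)^T$. Your treatment is in fact slightly more careful than the paper's, which silently drops the constant contribution from the $*$ entries and is loose with the row/column indexing that your explicit permutation conjugation tidies up.
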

\begin{proof}
Observe that the gain matrices $B_0,B_1\cdots,B_{b-1}$ appear in the last $m$ rows of $C_{\mathcal{S}}$. These rows are generated by multiplying the last $m$ rows $QP_{mb}$ with $Q^{-1}$.  The last $m$ rows of $QP_{mb}$ are as follows
\begin{equation}\label{LastRows}
\left(\begin{array}{rrrrrrrr}
0 & 0 & \cdots & 1 & * & \cdots & * & *\\  
v_{1,b+1} &v_{1,b+2} &\cdots & v_{1,mb}& * &\cdots & * & *\\
v_{2,b+1} &v_{2,b+2} &\cdots & v_{2,mb}& *& \cdots & * & *\\
\vdots & \vdots & \cdots & \vdots&\vdots& \cdots & \vdots & \vdots\\
v_{m-1,b+1} &v_{m-1,b+2} &\cdots & v_{m-1,mb}& * & \cdots &*&*
\end{array}\right)\in \mathbb{F}_2^{b \times mb}
\end{equation}
The element $C[mb-m+1,mb-m+1]$ is got by multiplying the $(mb-m+1)$-th row of $QP_{mb}$ with the $(mb-m+1)$-th column of $Q^{-1}$. Note that the $(mb-m+1)$-th column of $Q^{-1}$ is equal to the $b$-th column of $Q_P^{-1}$. As a consequence of Lemmas \ref{PUP} and \ref{PUP1}, this column has the following form.
\begin{equation}
Q^{-1}[:,mb-m+1]=(P_1,P_2,\cdots,P_{mb-b},1,0,\cdots,0)^T
\end{equation}
where $P_1,P_2,\cdots,P_{mb-b} \in \Gamma(mb-b)$. Therefore,
\begin{align*}
C[mb-b+1,mb-b+1]&=(\underbrace{0, 0 , \cdots,  1}_{(mb-b) \textrm{ entries}}, * , \cdots , * , *) \times (P_1,P_2,\ldots,P_{mb-b},1,0,\cdots,0)^T\\
&=P_{mb-b}
\end{align*}
 Hence, it is proved that $\Theta(C)\ge mb-b$.
\end{proof}

%%%%%%%%%%%%%%%%%%%%%%%%%%%%%%%%
\begin{example}
	Consider a primitive $\sigma-$LFSR with $4$, $2$-input $2$-output delay blocks i.e. $m=2$ and $b=4$. Therefore $n =mb= 8$. The primitive polynomial for the companion matrix $P_z$ is $f(x)=x^{8} + x^{4} + x^{3} + x^{2} + 1$. The corresponding matrices $Q$ and $Q_P$ have the following structure:

\begin{equation}
 Q=\left(\begin{array}{rrrrrrrr}
0 & 0 & 0 & 0 & 0 & 0 & 0 & 1 \\
x_{1} & x_{2} & x_{3} & x_{4} & x_{5} & x_{6} & x_{7} & x_{8} \\
0 & 0 & 0 & 0 & 0 & 0 & 1 & 0 \\
x_{2} & x_{3} & x_{4} & x_{5} & x_{6} & x_{7} & x_{8} & x_{1} + x_{3} + x_{4} + x_{5} \\
0 & 0 & 0 & 0 & 0 & 1 & 0 & 0 \\
x_{3} & x_{4} & x_{5} & x_{6} & x_{7} & x_{8} & x_{1} + x_{3} + x_{4} + x_{5} & x_{2} + x_{4} + x_{5} + x_{6} \\
0 & 0 & 0 & 0 & 1 & 0 & 0 & 0 \\
x_{4} & x_{5} & x_{6} & x_{7} & x_{8} & x_{1} + x_{3} + x_{4} + x_{5} & x_{2} + x_{4} + x_{5} + x_{6} & x_{3} + x_{5} + x_{6} + x_{7}
\end{array}\right)
\end{equation}

\begin{equation}
Q^{P}=\begin{array}{c|c}
Q_1=\begin{pmatrix}
0 & 0 & 0 & 0  \\
0 & 0 & 0 & 0  \\
0 & 0 & 0 & 0  \\
0 & 0 & 0 & 0  
\end{pmatrix}
 &
Q_2=\begin{pmatrix} 
0 & 0 & 0 & 1\\
0 & 0 & 1 & 0\\
0 & 1 & 0 & 0\\
1 & 0 & 0 & 0
\end{pmatrix}
\\ \hline
Q_3=\begin{pmatrix}
x_{1} & x_{2} & x_{3} & x_{4}\\
x_{2} & x_{3} & x_{4} & x_{5}\\ 
x_{3} & x_{4} & x_{5} & x_{6}\\ 
x_{4} & x_{5} & x_{6} & x_{7}
\end{pmatrix}
&

Q_4=\begin{pmatrix}
x_{5} & x_{6} & x_{7} & x_{8}\\
x_{6} & x_{7} & x_{8} &x_{1} + x_{3} + x_{4} + x_{5} \\
x_{7} & x_{8} &x_{1} + x_{3} + x_{4} + x_{5} & x_{2} + x_{4} + x_{5} + x_{6} \\
x_{8} & x_{1} + x_{3} + x_{4} + x_{5} & x_{2} + x_{4} + x_{5} + x_{6} & x_{3} + x_{5} + x_{6} + x_{7}
\end{pmatrix}
\end{array}
\end{equation}
The $4$-th row of $Q^{P}$ or the $7$-th column of $Q^{-1}$ is given by
\[Q^{-1}[:,7]=(P_1,P_2,P_3,P_4,1,0,0,0)\] where
\begin{itemize}
    \item[$P_1: $] $x_{2} x_{4} x_{6} x_{8} + x_{2} x_{4} x_{7} + x_{2} x_{5} x_{8} + x_{2} x_{6} + x_{3} x_{6} x_{8} + x_{3} x_{7} + x_{4} x_{5} x_{6} + x_{4} x_{6} + x_{4} x_{8} + x_{5}$.
    \item[$P_2: $] $ x_{1} x_{4} x_{6} x_{8} + x_{1} x_{4} x_{7} + x_{1} x_{5} x_{8} + x_{1} x_{6} + x_{2} x_{3} x_{6} x_{8} + x_{2} x_{3} x_{7} + x_{2} x_{4} x_{5} x_{8} + x_{2} x_{4} x_{6} x_{7} + x_{2} x_{5} x_{6} + x_{2} x_{5} x_{7} + x_{3} x_{4} x_{8} + x_{3} x_{5} x_{6} + x_{3} x_{5} x_{8} + x_{3} x_{6} x_{7} + x_{4} x_{5} + x_{4} x_{7}$
    \item[$P_3: $] $x_{1} x_{3} x_{6} x_{8} + x_{1} x_{3} x_{7} + x_{1} x_{4} x_{5} x_{8} + x_{1} x_{4} x_{6} x_{7} + x_{1} x_{5} x_{6} + x_{1} x_{5} x_{7} + x_{2} x_{3} x_{5} x_{8} + x_{2} x_{3} x_{6} x_{7} + x_{2} x_{4} x_{5} x_{7} + x_{2} x_{4} x_{6} + x_{2} x_{4} x_{8} + x_{2} x_{5} x_{6} + x_{2} x_{6} x_{8} + x_{2} x_{7} + x_{3} x_{4} x_{7} + x_{3} x_{4} x_{8} + x_{3} x_{5} x_{7} + x_{3} x_{5} + x_{4} x_{5} + x_{4} x_{6}$
    \item[$P_4: $] $x_{1} x_{3} x_{5} x_{8} + x_{1} x_{3} x_{6} x_{7} + x_{1} x_{4} x_{5} x_{7} + x_{1} x_{4} x_{6} + x_{1} x_{4} x_{8} + x_{1} x_{5} x_{6} + x_{2} x_{3} x_{5} x_{7} + x_{2} x_{3} x_{6} + x_{2} x_{4} x_{7} + x_{2} x_{5} x_{8} + x_{2} x_{5} + x_{2} x_{6} x_{7} + x_{3} x_{4} x_{6} + x_{3} x_{4} x_{7} + x_{3} x_{8} + x_{4} x_{5}$
\end{itemize}

Here, $C_{\mathcal{S}}[7,7]$, is equal to $P_4$ which is a polynomial of degree 4. 

\end{example}

\section{Case Study: Integration with SNOW 2.0}\label{CaseSt}
In this subsection, we first introduce SNOW 2.0, and then use it as a case study to show how $\sigma$-KDFC can be applied to an LFSR-based cipher stream. We refer to the resulting cipher as KDFC-SNOW.

%%%%%%%%%%%%%%%%%%%%%%%%%%%%%%%%%%%%%%%%%%%%%%%%

%%%%%%%%%%%%%%%%%%%%%%%%%%%%%%%%%%%%%%%%%%%%%
\subsection{SNOW 2.0:}
The SNOW series of word based stream ciphers was first introduced in \cite{SNOW1-01}. This version of SNOW is known as SNOW 1.0. This was shown to be vulnerable to a linear distinguishing attack as well as a guess and determine attack \cite{SNOW1-02}.

SNOW 2.0 (Adopted by ISO/IEC standard IS 18033-4) was introduced later in \cite{ekdahl2002new} as a modified version of SNOW 1.0. This version was shown to be vulnerable to algebraic and other attacks \cite{billet2005resistance,SNOW1-02,ahmadi2009heuristic,New-SN-07,nia2014new,nyberg2006improved}. We consider SNOW 2.0 as a test case and demonstrate how replacing the LFSR in this scheme with a $\sigma$-LFSR increases its resistance to various attacks.

 The block diagram of SNOW 2.0 is shown in figure \ref{SN2}.

\begin{figure}[H]
	\centering
	\includegraphics[width=0.9\linewidth]{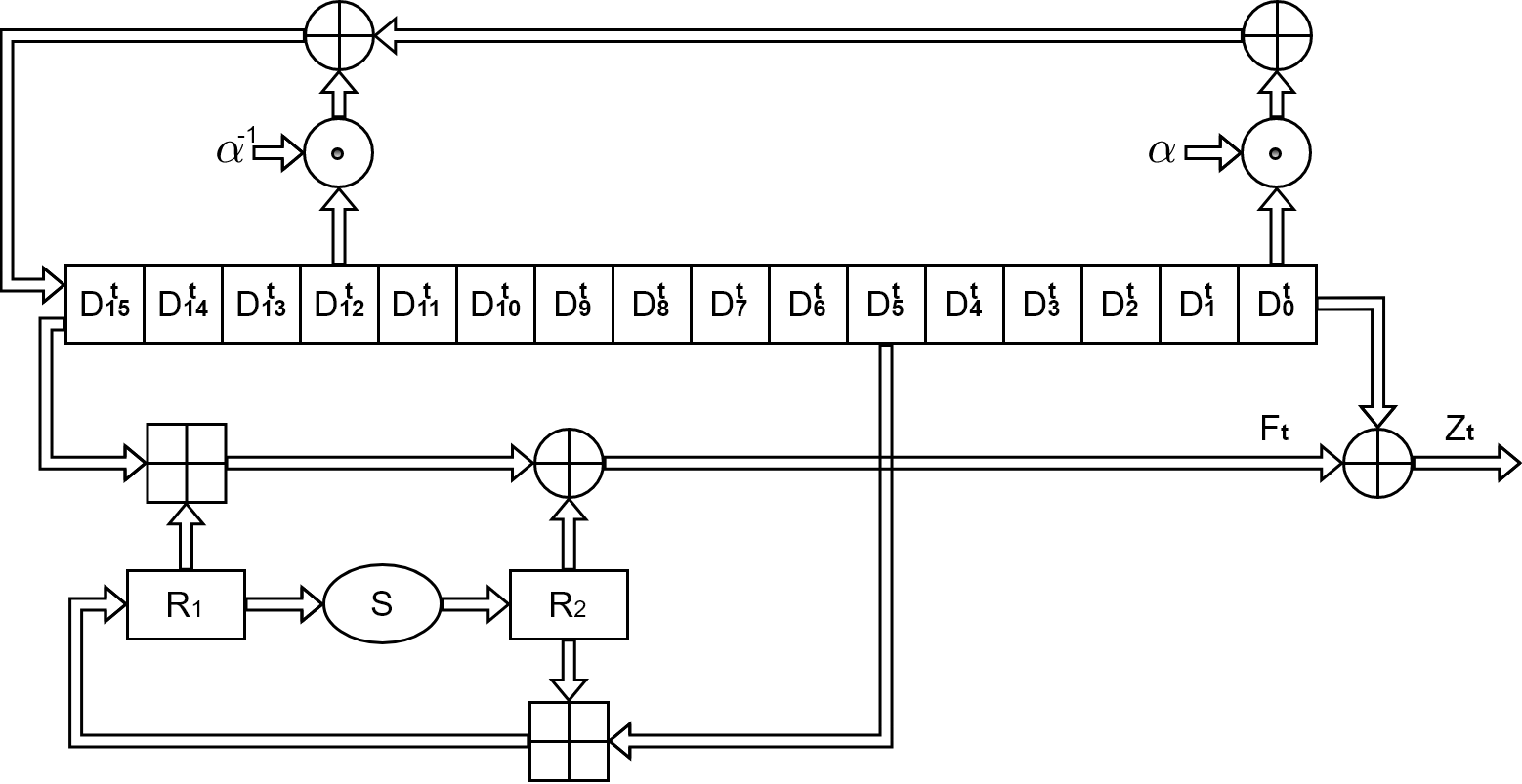}
	\caption{The block diagram of SNOW 2.0}
	\label{SN2}
\end{figure}

In figure \ref{SN2}, $+$ and $\boxplus$ represent bit wise XOR (addition in the field $GF(2)$)  and integer addition modulo $2^{32}$ respectively.  As shown in figure \ref{SN2}, the keystream generator in SNOW 2.0 consists of an LFSR and an FSM (Feedback State Machine). The LFSR implements the following linear recurring relation:
\begin{equation*}
  x_{n+16}=\alpha^{-1}x_{n+11}+x_{n+2}+\alpha x_{0}. 
\end{equation*}

 where, $\alpha$ is the root of the following primitive polynomial
 \begin{equation*}
   G_{S}(x)=(x^4+\beta^{23}x^{3}+\beta^{245}x^{2}+\beta^{48}x+\beta^{239})   \in \mathbb{F}_{2^{8}}[X]  
 \end{equation*}
 where $\beta$ is the root of the following primitive polynomial.
 
 \begin{equation*}
   H_{S}(x)=x^8+x^7+x^5+x^3+1  \in \mathbb{F}_{2}[X]  
 \end{equation*}
%The state transition equation of the LFSR in figure \ref{SN2} is as follows:
%\begin{equation*}
 % D_{16}^{t}=\alpha^{-1}D_{11}^{t}+D_{2}^t+\alpha D_{0}^{t}. 
%\end{equation*}
%where $D_i^t \in \mathbb{F}_{{2}^{32}}$ is the value stored in the $i^{th}$ delay block at time $t$ (after the $t^{th}$ clock cycle). 

The FSM contains two 32-bit registers $R1$ and $R2$. These registers are connected by means of an S-Box which is made using four AES S-boxes. This S-box serves as the source of nonlinearity. 

%%%%%%%%%%%%%%%%%%%%%%%%%%%%%%%%%%%%%%%%%%%%%%%%%%%

%%%%%%%%%%%%%%%%%%%%%%%%%%%%%%%%%%%%%%%%%%%%%%%%

\subsection{KDFC-SNOW:}
 In the proposed modification, we replace the LFSR part of SNOW 2.0 by a $\sigma$-LFSR having 16, 32-input 32-output delay blocks. The configuration matrix of the $\sigma$-LFSR is generated using Algorithm 1. We shall refer to the modified scheme, shown in Figure \ref{arch}, as KDFC-SNOW.

\begin{figure}[H]
	\centering
	\includegraphics[width=1\linewidth]{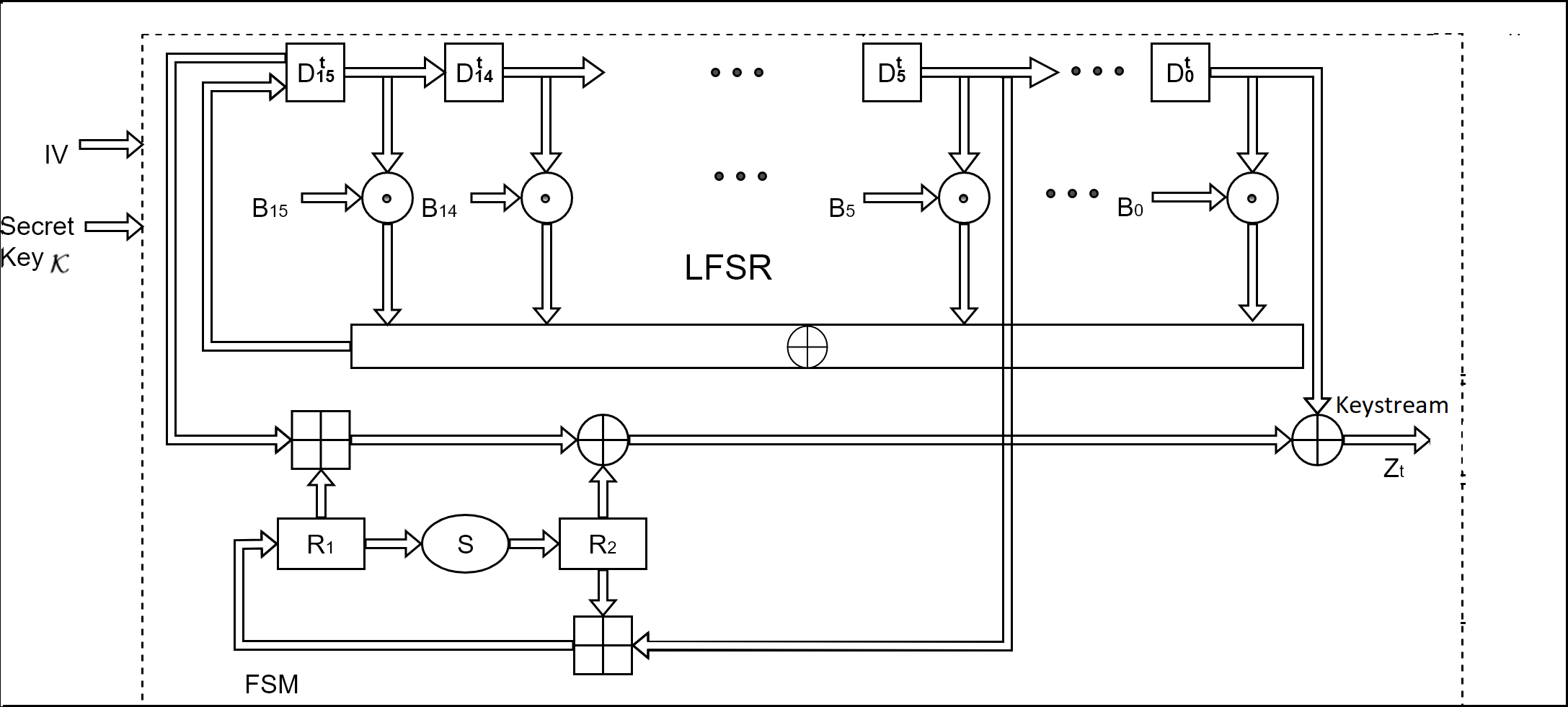}
	\caption{The block diagram KDFC-SNOW}
	\label{arch}
\end{figure}

During initialization the feedback function of the $\sigma$-LFSR is identical to that of SNOW-2.  As in SNOW 2.0, the $\sigma$-LFSR is initialized using a $128$-bit IV and a $128/256$-bit secret key ${K}$. KDFC-SNOW is run with this configuration for 32 clock cycles without producing any symbols at the output. The vectors generated in the last 12 of these clock cycles are used in Algorithms 1 to generate a new feedback configuration. As we have already mentioned, some of the iterations of Algorithm 1 are pre-calculated and the remaining ones are done as a part of the initialization process. In this case, it is assumed that 468 of these iterations are pre-calculated and the last 12 iterations are carried out in the initialization process. This calculated configuration replaces the original one and the resulting set-up is used to generate the keystream. 

\subsection{Initialization of KDFC-SNOW}

\begin{itemize}
    \item The delay blocks $D_0,\cdots, D_{15}$ are initialized using the 128/256 bit secret key $K$ and a 128 bit  $IV$ in exactly the same manner as SNOW 2.0. The registers $R_1$ and $R_2$ are set to zero.

   \item The initial feedback configuration of the $\sigma$-LFSR is identical to SNOW 2.0. This is done by setting $B_{11}$ and $B_0$ as  matrices that represent multiplication by $\alpha^{-1}$ and $\alpha$ respectively.  Further,  $B_2$ is set to identity. The other gain matrices are set to zero.
    \item KDFC-SNOW is run in this configuration for 32 clock cycles without making the output externally available. 
    The last 12 values of $F^t$ are used as the random numbers in Algorithm 1.
    \item A new configuration matrix is calculated using Algorithms 1 and the corresponding feedback configuration replaces the original one. The scheme is now run with this configuration. The first 32 vectors are discarded and the key stream starts from the $33^{rd}$ vector.
\end{itemize}

%\textbf{Note: }After each initialization procedure is over, there are $2^{12 \times 32}=2^{384}$ many configurations matrices possible for a given primitive polynomial of degree $512$ which depends upon the number of FSM output as a input to configuration matrix generation algorithm.

\vspace{1cm}

  %%%%%%%%%%%%%%%%%%%%%%%%%%%%%%%%%%%%%%%%%%%%% 

\subsection{Governing Equations of KDFC-SNOW}
Let $D_i^t \in \mathbb{F}_{{2}^{32}}$ denote the value stored in the $i^{th}$ delay block at the $t$-th time instant after the key stream generation has started.
The outputs of the delay blocks of the $\sigma$-LFSR are related as per the following equation:
\begin{equation}
 D_{15}^{t+1}=B_{0}D_{0}^{t} + B_{1}D_{1}^t + \cdots + B_{15}D_{15}^t   
\end{equation}
Therefore,
\begin{equation}\label{Up}
D_{k}^{t+1}=\begin{cases}
D_{k+t+1}^{0} &                                              0 \leq k+t+1 \leq 15\\
B_{0}D_{0}^{t+k} + B_{1}D_{1}^{k+t} + \cdots + B_{15}D_{15}^{t+k} & k+t+1 >15
\end{cases}
\end{equation}
The value of the keystream at the $t$-th time instant is given by the following equation
%Equation \ref{KS} gives the value of the keystream at time stamp $t$ represented by $z^t$ in terms of the values of the delay blocks in $\mathcal{S}$ at $t$ denoted by $D_{0}^t$ through $D_{15}^t$) as well as the register values of the FSM at the same time stamp ($R_1^t$, $R_2^t$).
 Let $F_t$ be the output of the FSM at time $t$,
\begin{equation} \label{FSMEq1}
    F_t=(D_{15}^t \boxplus R_1^t ) + R_2^t 
\end{equation}
The registers are updated as follows:
\begin{equation}\label{FSMEq2}
        R_1^{t+1}=D_5^t \boxplus R_2^t
 \end{equation} 
 \begin{equation}\label{FSMEq3}
       R_2^{t+1}=S(R_1^t)   
 \end{equation}
\begin{equation}\label{KS}
z^{t}=R_1^{t}\boxplus D_{15}^{t} + R_2^{t} + D_{0}^{t}
=(R_2^{t-1} \boxplus D_{4}^{t}) \boxplus D_{15}^t +R_2^{t}+ D_{0}^{t}
\end{equation}
where $R_1^t$ and $R_2^t$ represent the values of registers $R_1$ and $R_2$ at time instant $t$. The operation "$\boxplus$" is defined as follows:

\begin{equation}\label{bx}
   x \boxplus y=(x+y) \mod 2^{32}
\end{equation}

The challenge for an adversary in this scheme is to find the gain matrices \{$B_0,B_1,\cdots,B_{15}\}$ in addition to the initial state $\{D_{0}^{0},\cdots,D_{15}^{0}\}$.

Note that Equations \ref{FSMEq1} to \ref{KS} are got from the FSM. Since the FSM part of the keystream generator is identical for SNOW 2.0 and KDFC-SNOW these equations are identical for both schemes.

\subsection{Security enhancement due to KDFC-SNOW}\label{Eval}
\subsubsection{Algebraic Attack: }
We first briefly the Algebraic attack on SNOW 2 described in \cite{billet2005resistance} and demonstrate why this attack becomes difficult with KDFC-SNOW. This attack first attempts to break a modified version of the scheme where the $\boxplus$ operator is approximated by  $\oplus$. The state of LFSR and the  value of the registers at the end of the 32 initialization cycles are considered unknown variables.  This accounts for a total of  $512+32=544$ unknown variables. The algebraic degree of each of the S-box(S) equations (156 linearly independent quadratic equations in each clock cycle ) is 2.
Rearranging the terms in Equation \ref{KS}, we get the following

\begin{equation}\label{RRE}
 R_2^{t}=(R_2^{t-1} \boxplus D_{4}^{t})  \boxplus D_{15}^t+  D_{0}^{t} + z^{t}.
\end{equation}

Note that $R_1^{0}=R_2^{0}+z^{0}+D_{0}^{0}+D_{15}^{0}$.Therefore, by approximating $\boxplus$ as $\oplus$, Equation \ref{RRE} expands to the following:

\begin{equation}\label{Exp}
R_2^{t}=R_2^0+\sum_{i=0}^{t} z^{i}+ \sum_{i=0}^{t}(D_{4}^{i}+D_{15}^{i}+D_{0}^{i})
\end{equation}

Further, Equation \ref{FSMEq3} can be expanded as follows:
 
\begin{equation}\label{SBB}
R_2^{t+1}=S(R_1^t)=S(R_2^t+z^t+D_{15}^t+D_{0}^t)
\end{equation}
In Equation \ref{SBB}, the outputs of the delay blocks  can be related to the initial state of the LFSR using  the following equation.
\begin{equation}\label{Up1}
D_{k}^{t+1}=\begin{cases}
D_{k+t+1}^{0} &                                              0 \leq k+t+1 \leq 15\\
\alpha^{-1}D_{11}^{k+t}+D_{2}^{k+t}+\alpha D_{0}^{k+t} & k+t+1 >15
\end{cases}
\end{equation}
Because of the nature of the $S$-Box, Equation \ref{SBB} gives rise to 156 quadratic equations per time instant (\cite{billet2005resistance}).
 When these equations are linearized, the number of variables increases to $\sum_{i=0}^{2}\binom{544}{i} \approx 2^{17}$. Therefore with  $2^{17}/156 \approx 951$ samples, we get a system of equations, which can be solved in $\mathcal{O}(2^{51})$ time, to obtain the initial state of the LFSR and the registers. This attack is then modified to consider the $\boxplus$ operator. This attack has a time complexity of  approximately $\mathcal{O}(2^{294})$.
 
 When the LFSR in SNOW 2.0 is replaced by a $\sigma$-LFSR, the feedback equation is no longer known.  If the entries of the feedback gain matrices are considered as unknowns, then there are a total of  $16*m^2+mb + m = 16928 $ unknown variables (This includes the $16*m^2$ entries of the feedback matrices and $mb+m$ entries corresponding to the state of the LFSR and the register $R_2$ at the beginning of the key stream). The output of the delay blocks at any given instant are  functions of these variables. Here, Equation \ref{Up1} is replaced by Equation \ref{Up}. Now, if the outputs of the delay blocks in Equation \ref{SBB} are linked to the initial state of the LFSR (i.e. the state when the key stream begins) using Equation \ref{Up} instead of Equation \ref{Up1}, then the resulting equations contain the feedback matrices and their products. For example, in the expressions for $R_2^2$ and $R_2^3$, $D_{15}^1$ and $D_{15}^2$ are given as follows
 \begin{eqnarray*}
 D_{15}^1 &=& B_0D_0^0 + B_1D_1^0 + \hdots +B_{15}D_{15}^0\\
 D_{15}^2 &=& B_0D_0^1 + B_1D_1^1 + \hdots +B_{15}D_{15}^1\\
 &=& B_0D_1^0 + B_1D_2^0 + \hdots +B_{15}(B_0D_0^0 + B_1D_1^0 + \hdots +B_{15}D_{15}^0)\\
 &=& B_{15}B_0D_0^0 + (B_{15}B_1 +B_0)D_1^0 + \hdots + (B_{15}B_{15} + B_{14})D_{15}^0
 \end{eqnarray*}
 Observe that, while $D_{15}^1$ is a polynomial of degree two in the unknown variables, $D_{15}^2$ is a polynomial of degree 3. Similarly, with each successive iteration the degree of the expression for $D_{15}^t$ keeps increasing, till all the $m^2b$ entries of the feedback matrices are multiplied with each other. A similar thing happens with the expressions for $D_0^t$. This results in a set of polynomial equations having maximum degree equal to $m^2b+1 = 16385$  .  
 Therefore, although the equations generated by  Equation \ref{SBB} are quadratic in terms of the initial state of the $\sigma$-LFSR, they are no longer quadratic in the set of all unknowns. We instead have a system of equations in $16982$ variables with a maximum degree of degree of over $16000$. Linearizing such a system will give us a system of linear equations in $N = \sum_{i=0}^{M}{16416 \choose i}$ unknowns where $M$ is higher than 16000. Such an attack is therefore not feasible.
 
 One could instead consider the rows of the matrix $Y$ generated by Algorithm 1 as unknowns. Assuming that the first row is $e_1^n$, the total number of unknowns will now be $31*512 = 15872$. As we have already seen, the entries of the feedback matrices ($B_i$s) are polynomials in these variables. From Theorem \ref{IC}, the maximum degree of these polynomials is atleast $mb-b$. Therefore, the maximum degree of the equations generated by Equation \ref{SBB} will be atleast $mb-b+1 = 497$. Therefore, linearizing this system of equations  gives rise to a system of linear equations in $N = \sum_{i=0}^{497}{16416 \choose i} \approx \mathcal{O}(2^{3207})$ unknowns. Therefore, an algebraic attack on this scheme that uses linearization seems unfeasible.

\subsubsection{Distinguishing Attack: }
In the distinguishing attack, the attacker aims to distinguish the generated keystream from a random sequence. Distinguishing attacks on SNOW 2.0 have been launched using the linear masking method  \cite{watanabe2003distinguishing, nyberg2006improved, lee2008cryptanalysis}. This method essentially adapts the linear cryptoanalysis method given in \cite{matsui1993linear} to stream ciphers. In this method, the algorithm of the key stream generator is assumed to consist of two parts, a linear one and a non-linear one. In the case of SNOW 2.0, the linear part is the LFSR and the non-linear part is the FSM. The linear part satisfies a linear recurring relation of the form $f(x_n,x_{n+1},x_{n+2},\ldots,x_{n+k})=0$ for all $n$. We then try to find a linear relation, called the masking relation, that the non-linear part approximately satisfies. This relation is of the following form:
\begin{equation}\label{Mask}
    \sum_{i=0}^{\ell_1}\Gamma_ix_{n+i} = \sum_{i=0}^{\ell_2}\Lambda_iz_{n+i}
\end{equation}
where $z_0,z_1,\ldots$ is the output sequence of the key stream generator. The $\Gamma_i$s and $\Lambda$s are linear masks that map the corresponding $x_{n+i}$s and $z_{n+i}$s to $\mathbb{F}_2$ respectively. The error in the masking relation can be seen as a random variable. If $p$ is the probability that the non linear part satisfies Equation \ref{Mask}, then $p-\frac{1}{2}$ is called the bias of the masking relation. The Masking relation along with the linear recurring relation is used to generate a relation in terms of the elements of the output sequence. The error in this relation can also be seen as a random variable. If the probability of the sequence satisfying this relation is $p_f$, then $p_f - \frac{1}{2}$ is the bias of this relation. This bias can be related to the bias of the masking relation using the piling up lemma in \cite{matsui1993linear}. The main task in this type of attack is to find masks $\Gamma_i$s and $\Gamma_i^{\prime}$s which maximise the bias of the masking relation.
The following linear masking equation is used in \cite{watanabe2003distinguishing} and \cite{nyberg2006improved} for the FSM of SNOW 2.
\begin{equation} \label{MaskSNOW}
    \Gamma_0x_n +\Gamma_1x_{n+1} +\Gamma_5x_{n+5} + \Gamma_{15}x_{n+15} + \Gamma_{16}x_{n+16} = \Lambda_0z_{n} +\Lambda_1z_{n+1}
\end{equation}
In \cite{watanabe2003distinguishing} it is assumed that all the $\Gamma_i$s and $\Lambda_i$s are equal to each other. In \cite{nyberg2006improved} it is assumed that $\Gamma_0,\Gamma_{15}$ and $\Lambda_0$ are equal to each other.  $\Gamma_1,\Gamma_5.\Gamma_{16}$ and $\Lambda_1$ are also assumed to be equal. Since $f(x_n,x_{n+1},\ldots,x_{n+k})$ is a linear relation, the following relation can be written purely in terms of the $z_i$s
\begin{eqnarray*}
 \Gamma_0f(x_n,x_{n+1},\ldots,x_{n+k}) &+&\Gamma_1f(x_{n+1},x_{n+2},\ldots,x_{n+k+1})\\ +\Gamma_5f(x_{n+5},x_{n+6},\ldots,x_{n+k+5}) &+& \Gamma_{15}f(x_{n+15},x_{n+16},\ldots,x_{n+k+15})\\ + \Gamma_{16}f(x_{n+16},x_{n+17},\ldots,x_{n+k+16})&=&0
\end{eqnarray*}
The linear relation between the elements of the output sequence in both \cite{watanabe2003distinguishing} and \cite{nyberg2006improved} is obtained using this method. Further, if there are $\ell$ non-zero coefficients in $f$, then the random variable corresponding to the error in this relation is a  sum of $\ell$ random variables each corresponding to the error in the linear masking equation.

In the proposed $\sigma$-LFSR configuration, the feedback equation is not known. Therefore the only known linear recurring relation that the output of the $\sigma$-LFSR satisfies is the one defined by its characteristic polynomial. If the characteristic polynomial is assumed to be the same as that of the LFSR in SNOW 2, then the corresponding linear recurring relation has 250 non-zero coefficients. Further, since these coefficients are elements of $\mathbb{F}_2$, the non-zero coefficients are all equal to 1. Therefore, as a consequence of the piling up lemma, if the bias of the masking equation is $\epsilon$, then the bias of the relation between the elements of the key stream is given as follows
\begin{equation}
    \epsilon_{final} = 2^{249} \times \epsilon^{250}
\end{equation}
  The number of elements of the key stream needed to distinguish it from a random sequence is $\frac{1}{\epsilon_{final}^2}$. Therefore, for an identical linear masking equation, the length of the key stream for the distinguishing attack is much higher for the proposed configuration as compared to SNOW 2. This is demonstrated in the following table.
  
  	\begin{table}[H]
		\centering
		\scalebox{.75}{\begin{tabular}{|c|c|c|c|c|c|}
				\hline
				\textbf{Reference}  & \textbf{$\epsilon$} & \textbf{$\epsilon_{final}$} & \textbf{$\epsilon_{final}$} & \textbf{\#Keystream} &
				\textbf{\#Keystream}\\
				\textbf{}  &  & \textbf{ (SNOW 2.0)} & \textbf{ (KDFC-SNOW)} & \textbf{(SNOW 2.0)} &
				\textbf{(KDFC-SNOW)}\\
				\hline
				\cite{watanabe2003distinguishing}  & $2^{-27.61}$ & $2^{-112.25}$ & $2^{-6653.5}$ & $2^{225}$ & $2^{13307}$\\
				\hline
				\cite{nyberg2006improved} &  $2^{-15.496}$ & $2^{-86.9}$ & $2^{-3625}$ & $2^{174}$ & $2^{7250}$\\
				\hline
		\end{tabular}}
		\caption{Comparison of required length of key stream for the distinguishing attack }
	\end{table}

\subsubsection{Fast Correlation Attack: }
The Fast Correlation Attack is a commonly used technique for the cryptanalysis of LFSR based stream ciphers. This method was first introduced for bitwise keystreams in (\cite{meier1989fast}). Here, the attacker views windows of the key stream as noisy linear encodings of the initial state of the LFSR. She then tries to recover the initial state by decoding this window. Further, linear combinations of elements in this window can be seen as encodings of subsets of the initial state. This results in smaller codes which are more efficient to decode \cite{chepyzhov2000simple}. The linear recurring relation satisfied by the output of the LFSR is used to generate the parity check matrix for this code.  A Fast correlation attack for word based stream ciphers was first described in \cite{jonsson2001correlation}. An improvement on this attack is given in \cite{lee2008cryptanalysis}.  Both these schemes consider a linear recurring relation with coefficients in $\mathbb{F}_2$. For SNOW 2.0, this relation has order 512. This is equivalent to considering each component sequence to be generated by a conventional bitwise LFSR having the same characteristic polynomial as the LFSR in SNOW 2.0.  The time complexity of the attack given in \cite{lee2008cryptanalysis} is $2^{212.38}$. The scheme in \cite{zhang2015fast} considers the LFSR in SNOW 2.0 to be over $\mathbb{F}_{2^8}$. This results in a linear recurring relation of order 64. This modification results in a significant improvement in the time complexity of the attack. The time complexity of this attack is $2^{164.5}$ which is around $2^{49}$ times better than that of the attack given in  \cite{lee2008cryptanalysis}. However, in order to derive the linear recurring relation over $\mathbb{F}_{2^8}$, the knowledge of the feedback function is critical.

In KDFC SNOW, the characteristic polynomial of the $\sigma$-LFSR is publicly known. The attacker can therefore generate a linear recurring relation over $ \mathbb{F}_2$ that the output of the $\sigma$-LFSR satisfies. Therefore, the attack given in \cite{lee2008cryptanalysis} will also be effective against KDFC-SNOW. However, without the knowledge of the feedback function, the attacker will not be able to derive a linear recurring relation over $\mathbb{F}_{2^8}$. Hence, KDFC-SNOW is resistant against the attack given in \cite{zhang2015fast}. Thus.  the best time complexity that a known fast correlation attack can achieve against KDFC SNOW is $2^{212.38}$

\subsection{Guess and Determine Attack}
In a Guess and Determine Attack, the attacker aims to estimate the values of a minimum number of variables using which the complete sequence can be constructed. For SNOW 2.0, this includes the values of the outputs of the delay blocks of the LFSR and the outputs of the registers of the FSM at some time instant. This is done by guessing some of the values and determining the rest of them using system equations. If the sequence generated using these estimates matches the output of the key-stream generator, then the guesses are deemed to be correct. Otherwise, a fresh set of guesses are considered. The set of variables whose values are guessed is known as the basis for the attack. For both SNOW 2.0 and KDFC-SNOW, these variables take their values from $\mathbb{F}_2^{32}$. Hence, if the size of the basis is $k$, then the probability of a correct guess is $2^{-32k}$. Thus, on average, one needs $\mathcal{O}(2^{-32k})$ attempts to make a correct guess. Therefore, the problem here is to find a basis of the minimum possible size. A systematic Vitterbi-like algorithm for doing this is given in \cite{ahmadi2009heuristic}.The complexity of this attack was found to be $2^{265}$(\cite{ahmadi2009heuristic}) for SNOW2.0.  The complexity of this attack reduced to $2^{192}$ in (\cite{nia2014new}) by incorporating a couple of auxiliary equations. We now briefly describe the algorithm given in \cite{ahmadi2009heuristic} in the context of SNOW 2.0

Consider the following equations which are satisfied by   SNOW 2.0
\begin{eqnarray}
 D_t^{16}&=&\alpha ^{-1} D_t^{11} + D_2^{t} + \alpha D_t \label{E1}\\    R1_t&=&D_t^4 + S(R1_{t-2}) \label{E2} \\
 z_t&=&D_t + (D_t^{15} + R1_t) + S(R1_{t-1}) \label{E3}
\end{eqnarray}
 These equations are used to generate the following tables
\begin{table}

	\begin{tabular}{|c |c | c |c|}

			\hline
			0 & 2 & 11 & 16\\
			\hline
			1 & 3 &12 & 17\\
			\hline
			$\vdots$ & $\vdots$ & $\vdots$  & $\vdots$ \\
			\hline
			18 & 20 & 29 &34 \\
			\hline
			\end{tabular}
			\hfill
				\begin{tabular}{|c |c |c |}

			\hline
			4 & 35 & 37\\
			\hline
			5 & 36  & 38\\
			\hline
			$\vdots$ & $\vdots$  & $\vdots$\\
			\hline
			22 & 53 & 55\\
			\hline
			\end{tabular}
			\hfill
				\begin{tabular}{|c |c |c |c |c | c |}

			\hline
			0 & 15 & 36 & 37 \\
			\hline
			1 & 16 & 37 & 38 \\
			\hline
			$\vdots$ & $\vdots$ & $\vdots$  & $\vdots$ \\
			\hline
			18 & 33 & 54 & 55\\
			\hline
			\end{tabular}\\
			\caption{Index table for SNOW 2.0}
	\end{table}
The entries in the above tables correspond to the variables that are to be estimated. The entries in the first table, i.e. 0 to 34, correspond to 35 consecutive outputs of the LFSR. The entries 35 to 55 correspond to 21 consecutive entries of Register $R1$. Each row of the above tables correspond to the values of the delay blocks and registers in Equations  \ref{E1}, \ref{E2} and \ref{E3} at a particular time instant. 

We now consider a multi stage graph with 56 nodes in each stage corresponding to the 56 entries in the above tables. Each node is connected to all the nodes in the next stage giving rise to a trellis diagram. An entry is said to be eliminated by a path if, knowing the values of the entries corresponding to the nodes in the path, the value of that entry can be calculated. 

We now recursively calculate an optimal path that eliminates all the entries. The desired basis corresponds to the nodes in this path. In the $i$-th iteration of this algorithm we calculate the optimal path of length $i$ to each node in the $i$-th stage. In order to find the optimal path to the $k$-th node, we consider all the incoming edges of node $k$. By appending the node $k$ to the optimal paths of length $i-1$ ending at the source nodes of these edges, we get 55 paths of length $i$. We choose the edge corresponding to the path that eliminates the most number of variables. In case of a tie, we consider the path that results in the most number of rows with 2 unknowns and so on. This process is continued till we get a path that eliminates all the entries. This algorithm results in a basis of cardinality 8 for SNOW 2.0.

In KDFC-SNOW, the feedback equation of the $\sigma$-LFSR is not known. The smallest known linear recurring relation that the output of the $\sigma$-LFSR satisfies is the relation corresponding to its characteristic polynomial. This relation is given as follows
\begin{eqnarray*} 
    x_{n+512} = x_{n+510} + x_{n+504} + x_{n+502} + x_{n+501} + x_{n+494} + x_{n+493} + x_{n+490} +\\ x_{n+486} + x_{n+485} + x_{n+483} + x_{n+481} + x_{n+480} + x_{n+478} + x_{n+477} + x_{n+471} +\\ x_{n+470} + x_{n+469} + x_{n+466} + x_{n+462} + x_{n+461} + x_{n+459} + x_{n+458} + x_{n+452} +\\ x_{n+449} + x_{n+446} + x_{n+445} + x_{n+444} + x_{n+441} + x_{n+438} + x_{n+437} + x_{n+434} +\\ x_{n+433} + x_{n+432} + x_{n+431} + x_{n+429} + x_{n+427} + x_{n+424} + x_{n+423} + x_{n+420} +\\ x_{n+419} + x_{n+414} + x_{n+412} + x_{n+411} + x_{n+409} + x_{n+405} + x_{n+402} + x_{n+400} +\\ x_{n+399} + x_{n+398} + x_{n+396} + x_{n+395} + x_{n+393} + x_{n+392} + x_{n+390} + x_{n+388} +\\ x_{n+387} + x_{n+385} + x_{n+375} + x_{n+374} + x_{n+372} + x_{n+371} + x_{n+366} + x_{n+365} +\\ x_{n+363} + x_{n+362} + x_{n+359} + x_{n+357} + x_{n+356} + x_{n+355} + x_{n+354} + x_{n+353} +\\ x_{n+352} + x_{n+351} + x_{n+350} + x_{n+347} + x_{n+345} + x_{n+344} + x_{n+343} + x_{n+341} +\\ x_{n+339} + x_{n+338} + x_{n+337} + x_{n+336} + x_{n+333} + x_{n+330} + x_{n+329} + x_{n+326} +\\ x_{n+324} + x_{n+322} + x_{n+319} + x_{n+310} + x_{n+307} + x_{n+306} + x_{n+305} + x_{n+304} +\\ x_{n+303} + x_{n+301} + x_{n+299} + x_{n+298} + x_{n+297} + x_{n+296} + x_{n+295} + x_{n+294} +\\ x_{n+293} + x_{n+292} + x_{n+291} + x_{n+289} + x_{n+286} + x_{n+285} + x_{n+283} + x_{n+282} +\\ x_{n+281} + x_{n+278} + x_{n+276} + x_{n+274} + x_{n+271} + x_{n+269} + x_{n+264} + x_{n+262} +\\ x_{n+259} + x_{n+258} + x_{n+257} + x_{n+255} + x_{n+253} + x_{n+251} + x_{n+249} + x_{n+248} +\\ x_{n+243} + x_{n+240} + x_{n+239} + x_{n+238} + x_{n+236} + x_{n+235} + x_{n+233} + x_{n+232} +\\ x_{n+230} + x_{n+229} + x_{n+228} + x_{n+227} + x_{n+226} + x_{n+222} + x_{n+217} + x_{n+216} +\\ x_{n+215} + x_{n+214} + x_{n+213} + x_{n+210} + x_{n+208} + x_{n+206} + x_{n+203} + x_{n+201} +\\ x_{n+199} + x_{n+193} + x_{n+190} + x_{n+184} + x_{n+179} + x_{n+178} + x_{n+177} + x_{n+175} +\\ x_{n+174} + x_{n+173} + x_{n+172} + x_{n+171} + x_{n+169} + x_{n+165} + x_{n+164} + x_{n+163} +\\ x_{n+158} + x_{n+156} + x_{n+155} + x_{n+153} + x_{n+152} + x_{n+151} + x_{n+149} + x_{n+147} +\\ x_{n+146} + x_{n+143} + x_{n+141} + x_{n+138} + x_{n+136} + x_{n+132} + x_{n+131} + x_{n+129} +\\ x_{n+128} + x_{n+126} + x_{n+125} + x_{n+124} + x_{n+123} + x_{n+121} + x_{n+120} + x_{n+119} +\\ x_{n+118} + x_{n+117} + x_{n+116} + x_{n+115} + x_{n+113} + x_{n+112} + x_{n+111} + x_{n+109} +\\ x_{n+105} + x_{n+104} + x_{n+103} + x_{n+102} + x_{n+98} + x_{n+97} + x_{n+94} + x_{n+93} + x_{n+89}\\ + x_{n+88} + x_{n+87} + x_{n+81} + x_{n+78} + x_{n+76} + x_{n+75} + x_{n+73}\\ + x_{n+72} + x_{n+70} + x_{n+69} + x_{n+68} + x_{n+67} + x_{n+66} + x_{n+65} + x_{n+63}\\ + x_{n+59} + x_{n+58} + x_{n+57} + x_{n+56} + x_{n+55} + x_{n+53} + x_{n+51} + x_{n+50}\\ + x_{n+49} + x_{n+47} + x_{n+46} + x_{n+45} + x_{n+44} + x_{n+41} + x_{n+39} + x_{n+37}\\ + x_{n+36} + x_{n+33} + x_{n+30} + x_{n+26} + x_{n+25} + x_{n+21} + x_{n+20} + x_{n+19}\\ + x_{n+16} + x_{n+5} + x_0
\end{eqnarray*}
As in SNOW 2.0, the following equations are also satisfied,
\begin{eqnarray*}
    R1_n&=&x_{n+4} + S(R1_{n-1})\\
 z_n&=&x_n + (x_{n+15} + R1_n) + S(R1_{n-1})
\end{eqnarray*}
The following tables can be constructed using these three equations.
 \begin{table}

	\begin{tabular}{|c |c |c |c |c | c |c|}

			\hline
			512 & 510 & 504 & 502 & $\cdots$ & 5 & 0\\
			\hline
			513 & 511 &505 & 503&$\cdots$ & 6 & 1\\
			\hline
			$\vdots$ & $\vdots$ & $\vdots$  & $\vdots$ & $\vdots$& $\vdots$ & $\vdots$\\
			\hline
			1025 & 1023 & 1007&1005 &$\cdots$ & 516 & 513\\
			\hline
			\end{tabular}
			\hfill
				\begin{tabular}{|c |c |c |}

			\hline
			4 & 1026 & 1028\\
			\hline
			5 & 1027  & 1029\\
			\hline
			$\vdots$ & $\vdots$  & $\vdots$\\
			\hline
			517 & 1539 & 1541\\
			\hline
			\end{tabular}
			\hfill
				\begin{tabular}{|c |c |c |c |c | c |}

			\hline
			0 & 15 & 1027 & 1028 \\
			\hline
			1 & 16 & 1028 & 1029 \\
			\hline
			$\vdots$ & $\vdots$ & $\vdots$  & $\vdots$ \\
			\hline
			513 & 528 & 1540 & 1541\\
			\hline
			\end{tabular}\\
			\caption{Index table for $f1(x)$,equation $32$ and equation $34$}
	
	\end{table}
	We ran the Vitterbi-like algorithm with the above tables on a cluster with 40 INTEL(R) XEON(R) CPUs (E5-2630 2.2GHz). The program ran for 16 iterations and generated the path \{1041, 17, 15, 13, 28, 16, 11, 14, 9, 1050, 18, 39, 12, 7, 5, 0, 3\}. This path has length 17.  This corresponds to a time complexity of  $\mathcal{O}(2^{544})$.

\subsection{Randomness Test}
In this subsection, we evaluate the randomness of the keystream generated by KDFC-SNOW.

\subsubsection{Test  Methodology\\}

We have used the NIST randomness test suite to evaluate the randomness of a keystream generated by KDFC-SNOW.There are are $16$ randomness tests in the suite. Each test returns a level of significance i.e. $P-Value$. If this value is above $0.01$ for a given test, then the keystream is considered to be random for that test.

KDFC-SNOW has been implemented using SageMath 8.0. The NIST randomness tests have been conducted on the generated keystream using  Python 3.6. The characterestic polynomial of the $\sigma$-LFSR has been taken as $f(x)=x^{512} + x^{510} + x^{504} + x^{502} + x^{501} + x^{494} + x^{493} + x^{490} + x^{486} + x^{485} + x^{483} + x^{481} + x^{480} + x^{478} + x^{477} + x^{471} + x^{470} + x^{469} + x^{466} + x^{462} + x^{461} + x^{459} + x^{458} + x^{452} + x^{449} + x^{446} + x^{445} + x^{444} + x^{441} + x^{438} + x^{437} + x^{434} + x^{433} + x^{432} + x^{431} + x^{429} + x^{427} + x^{424} + x^{423} + x^{420} + x^{419} + x^{414} + x^{412} + x^{411} + x^{409} + x^{405} + x^{402} + x^{400} + x^{399} + x^{398} + x^{396} + x^{395} + x^{393} + x^{392} + x^{390} + x^{388} + x^{387} + x^{385} + x^{375} + x^{374} + x^{372} + x^{371} + x^{366} + x^{365} + x^{363} + x^{362} + x^{359} + x^{357} + x^{356} + x^{355} + x^{354} + x^{353} + x^{352} + x^{351} + x^{350} + x^{347} + x^{345} + x^{344} + x^{343} + x^{341} + x^{339} + x^{338} + x^{337} + x^{336} + x^{333} + x^{330} + x^{329} + x^{326} + x^{324} + x^{322} + x^{319} + x^{310} + x^{307} + x^{306} + x^{305} + x^{304} + x^{303} + x^{301} + x^{299} + x^{298} + x^{297} + x^{296} + x^{295} + x^{294} + x^{293} + x^{292} + x^{291} + x^{289} + x^{286} + x^{285} + x^{283} + x^{282} + x^{281} + x^{278} + x^{276} + x^{274} + x^{271} + x^{269} + x^{264} + x^{262} + x^{259} + x^{258} + x^{257} + x^{255} + x^{253} + x^{251} + x^{249} + x^{248} + x^{243} + x^{240} + x^{239} + x^{238} + x^{236} + x^{235} + x^{233} + x^{232} + x^{230} + x^{229} + x^{228} + x^{227} + x^{226} + x^{222} + x^{217} + x^{216} + x^{215} + x^{214} + x^{213} + x^{210} + x^{208} + x^{206} + x^{203} + x^{201} + x^{199} + x^{193} + x^{190} + x^{184} + x^{179} + x^{178} + x^{177} + x^{175} + x^{174} + x^{173} + x^{172} + x^{171} + x^{169} + x^{165} + x^{164} + x^{163} + x^{158} + x^{156} + x^{155} + x^{153} + x^{152} + x^{151} + x^{149} + x^{147} + x^{146} + x^{143} + x^{141} + x^{138} + x^{136} + x^{132} + x^{131} + x^{129} + x^{128} + x^{126} + x^{125} + x^{124} + x^{123} + x^{121} + x^{120} + x^{119} + x^{118} + x^{117} + x^{116} + x^{115} + x^{113} + x^{112} + x^{111} + x^{109} + x^{105} + x^{104} + x^{103} + x^{102} + x^{98} + x^{97} + x^{94} + x^{93} + x^{89} + x^{88} + x^{87} + x^{81} + x^{78} + x^{76} + x^{75} + x^{73} + x^{72} + x^{70} + x^{69} + x^{68} + x^{67} + x^{66} + x^{65} + x^{63} + x^{59} + x^{58} + x^{57} + x^{56} + x^{55} + x^{53} + x^{51} + x^{50} + x^{49} + x^{47} + x^{46} + x^{45} + x^{44} + x^{41} + x^{39} + x^{37} + x^{36} + x^{33} + x^{30} + x^{26} + x^{25} + x^{21} + x^{20} + x^{19} + x^{16} + x^{5} + 1$. 

(This polynomial is the characteristic polynomial of the LFSR in SNOW 2.0 when it is implemented as a $\sigma$-LFSR i.e. when multiplication by $\alpha$ and  $\alpha^{-1}$ are represented by matrices).\par
The keystream has been generated using the following  key (K) and initialization vector (IV).

\begin{equation}
K=[681, 884, 35, 345, 203, 50, 912, 358], \quad
IV = [645, 473, 798, 506]
\end{equation}

\subsubsection{Test Results\\}
\vspace{0.2in}
The results obtained from $14$ NIST tests are shown in table \ref{NISTRand14}. 
 
\begin{table}[H]
\begin{center}
 	\begin{tabular}{|c|c|c|c|}
		\hline
		\textbf{Number}&\textbf{Test}&\textbf{P-Value}&\textbf{Random}\\
		\hline
		01. &Frequency Test (Monobit)                      	& 0.35966689490586123		&\checkmark\\
		\hline
		02.  &Frequency Test within a Block                  &  0.24374184001729746 	 &\checkmark\\
		\hline
		03.  &Run Test                                       & 0.9038184342313019  	 &\checkmark\\
		\hline
		04.  &Longest Run of Ones in a Block                 &	 0.5246846287441829   & \checkmark\\
		\hline
		05.  &Binary Matrix Rank Test                        &	0.1371167998339736     &	 \checkmark\\
		\hline
		06.  &Discrete Fourier Transform (Spectral) Test     &	0.1371167998339736   &	 \checkmark\\
		\hline
		07.  &Non-Overlapping Template Matching Test         & 0.3189818228443801     &  \checkmark\\
		\hline
		08.  &Overlapping Template Matching Test             &	0.211350493609367       &	 \checkmark\\
		\hline
		09.  & Maurer's Universal Statistical test            & 0.4521082097311434	  &\checkmark\\
		\hline
		10.  &Linear Complexity Test                          & 0.1647939201114819 	  &\checkmark\\
		\hline
		11. &Serial Test									  & 0.7821664366290292                     &\checkmark\\
		\hline
		12.  &Approximate Entropy Test                        & 0.880218270580662     &\checkmark\\
		\hline
		13.  &Cummulative Sums (Forward) Test                &	 0.34630799549695923	  &\checkmark\\
		\hline
		14.  &Cummulative Sums (Reverse) Test                &	 0.6633686090204551     &\checkmark\\
		\hline 
		\end{tabular}
\end{center}
	\caption{NIST Randomness Test}
	\label{NISTRand14}
\end{table}

 The results obtained for the \textit{Random Excursions Test} are shown in table \ref{RandExc}.

\begin{table}[H]
\begin{center}
  	\begin{tabular}{|c|c|c|c|}
		\hline
		\textbf{State} &\textbf{CHI SQUARED} &\textbf{P-Value} &\textbf{Random}\\
		\hline	
		-4 & 9.375081555789523 & 0.09500667227464867 & \checkmark\\
		\hline
		-3 & .9066918454935624 & 0.969735280059932 & \checkmark\\
		\hline
		-2 & 3.196312192020347 & 0.6697497097941535 & \checkmark\\
		\hline
		-1 & 5.343347639484978 & 0.3754291967984828 & \checkmark\\
		\hline
		1 & 5.446351931330472 & 0.363864453873992 &  \checkmark\\
		\hline
		2 & 6.937635775976262 & 0.22531988887331122 & \checkmark\\
		\hline
		3 & 13.843145064377687 & 0.016637085511558194 & \checkmark\\
		\hline
		4 & 3.790226890440857 & 0.5799959469559587 & \checkmark\\
		\hline
	\end{tabular}
\end{center}	
	\caption{15. Random Excursions Test}
	\label{RandExc}
	\end{table}

Table \ref{RandExcVar} shows the results for the $16^{th}$ test i.e. the \textit{Random Excursions Variant Test}

\begin{table}[H]
\begin{center}
	\begin{tabular}{|c|c|c|c|c|c|c|c|}
		\hline
		\textbf{State} &\textbf{Count} &\textbf{P-Value} &\textbf{Random}& \textbf{State} &\textbf{Count} &\textbf{P-Value} &\textbf{Random}\\
		\hline
		'-9.0'&  270&  0.11944065987006025& \checkmark &'+1.0'& 467&  0.9738690952237389 & \checkmark\\
		\hline
		'-8.0'&307&0.1787039957218327&\checkmark&'+2.0'&  488 & 0.6773674079894312 &\checkmark\\
		\hline
		'-7.0'& 359 & 0.3310083710716354 & \checkmark&'+3.0'& 470 &0.9532739974827851&\checkmark\\
		\hline
		'-6.0'& 389 &0.44696915370831947 &\checkmark&'+4.0'& 416& 0.5358953455898371& \checkmark\\
		\hline
		'-5.0'& 418 &0.6002107789999439 &\checkmark&'+5.0'& 386& 0.3823929438406025&\checkmark\\
		\hline
		'-4.0'& 426& 0.6204409395957975&\checkmark&'+6.0'& 397& 0.495576078534262 & \checkmark\\
		\hline
		'-3.0'& 439 & 0.6924575808023399&\checkmark&'+7.0'& 430& 0.7436251044167517& \checkmark\\
		\hline
		'-2.0'& 486&  0.7052562223122887&\checkmark&'+8.0'& 454& 0.9191606777606087& \checkmark\\
		\hline
		'-1.0'& 486& 0.512389348919496&\checkmark&'+9.0'& 481& 0.9051424340008056 &\checkmark\\
		\hline
	\end{tabular}
\end{center}
	\caption{Randomness Excursions Variant Test}
	\label{RandExcVar}
\end{table}

These results are comparable to that of SNOW 2.0. Note that the feedback configuration of SNOW 2.0 is one of the possible feedback configurations in the $\sigma$-KDFC scheme.

%%%%%%%%%%%%%%%%%%%%%%%%%%%%%%%%%%%%%%%%%%%%%%%%%%%%%%%%%
%%%%%%%%%%%%%%%%%%%%%%%%%%%%%%%%%%%%%%%%%%%%%%%%%%%%%%%%%

%%%%%%%%%%%%%%%%%%%%%%%%%%%%%%%%%%%%%%%%%%%%%%%%%%%%%%%%%
%%%%%%%%%%%%%%%%%%%%%%%%%%%%%%%%%%%%%%%%%%%%%%%%%%%%%%%%%
\subsection{Challenges in implementation}
The main problem of KDFC lies in its software implementation. Since the feedback function is not fixed, look up tables cannot be used in the implementation of the $\sigma$-LFSR. Further, the choice of the feedback configurations is not restricted to the set of efficiently implementable $\sigma$-LFSR configurations given in (\cite{zeng2007high}). This makes the implementation of KDFC SNOW extremely challenging.

 In our implementation, the state of the $\sigma$-LFSR is stored as a set of 32 integers. The $i$-th integer corresponds to the $i$-th output of the delay blocks. Calculating the feedback function of the $\sigma$-LFSR involves calculating the bitwise XOR of a subset of the columns of the feedback matrices ($B_i$s). In order to make the implementation more efficient, for all $1\leq i \leq 32$, the $i$-th columns of the feedback matrices are stored in adjacent memory locations. Thus, each integer in the state of the $\sigma$-LFSR corresponds to a set of columns of the the feedback matrices which are stored in a continuous block of memory. The state vector is now sampled one integer at a time and the columns of the $B_i$s corresponding to the non zero bits in these integers are XORed with each other. We then do a bit-wise right shift on each of these integers and introduce the result of the XOR operation bitwise as the most significant bits. In this way, the $\sigma$-LFSR can be implemented using bitwise XORs and shifts. The FSM is implemented as in SNOW 2.0 \cite{ekdahl2002new}. This implementation takes 25 cycles to generate a single word on an Intel Probook 4440s machine with a 2.8 Ghz i5 processor.
 
 Each iteration of Algorithm 1 involves solving a system of linear equations. This process is time consuming and contributes to increasing the initialization time. The initialization process was implemented using a C code with open mp (with 3 threads). In this implementation linear equations were solved using a parallel implementation of the LU decomposition algorithm. The initialization process took a total time of 5 to 15 seconds on an Intel Probook 4440s machine with a 2.8 Ghz i5 processor.

\section{Conclusions}\label{Conc}
In this paper, we have described a method of using $\sigma$-LFSRs with key dependent feedback configurations in stream ciphers that use word based LFSRs.  In this method, an iterative configuration generation algorithm(CGA) uses key-dependant random numbers to generate a random feedback configuration for the $\sigma$-LFSR. We have theoretically analysed the algebraic degree of the resulting feedback configuration  As a test case, we have demonstrated how this scheme can be used along with the Finite State Machine of SNOW 2.0. We have analysed the security of the resulting  key-stream generator against various attacks and have demonstrated the improvement in security as compared to SNOW 2.0.   Further, the keytreams generated by the proposed method are comparable to SNOW 2.0 from a randomness point of view.

\section*{Acknowledgement}
The authors are grateful to Prof. Harish K. Pillai, Department of Electrical Engineering, Indian Institute of Technology Bombay for his valuable guidance and Associate Prof. Gaurav Trivedi, Department of Electronics and Electrical Engineering, Indian Institute of Technology Guwahati for helping us with computational resources.

\section{Declarations}
\subsection{Funding}
Not applicable.
\subsection{Conflict of Interest}
Authors declare no conflict of interest.
\subsection{Code availability}
Code will be available upon request to the authors.
\subsection{Authors' contributions}
Not applicable.
\subsection{Availability of data and material}
Not applicable.
\subsection{Consent to participate}
Not applicable.
\subsection{Consent for publication }
All authors consent to the publication of the manuscript in Cryptography and Communications upon acceptance.

\bibliographystyle{alpha}
\bibliography{NEWSIGMA}

\newcommand{\etalchar}[1]{$^{#1}$}
\begin{thebibliography}{WBDC03}

\bibitem[AE09]{ahmadi2009heuristic}
Hadi Ahmadi and Taraneh Eghlidos.
\newblock Heuristic guess-and-determine attacks on stream ciphers.
\newblock {\em IET Information Security}, 3(2):66--73, 2009.

\bibitem[AG98]{New-Pr-01}
N.K. Ahmad, A.and~Nanda and K.~Garg.
\newblock Critical role of primitive polynomials in an lfsr based testing
  technique.
\newblock {\em Electronics Letters}, 24(15):953--956, 1998.

\bibitem[ANA16]{Appl-01}
Diyana Afdhila, Surya~Michrandi Nasution, and Fairuz Azmi.
\newblock Implementation of stream cipher salsa20 algorithm to secure voice on
  push to talk application.
\newblock In {\em IEEE Asia Pacific Conference on Wireless and Mobile
  (APWiMob)}, Bandung, Indonesia, September 2016. IEEE.

\bibitem[BBC{\etalchar{+}}08]{berbain2008sosemanuk}
C{\^o}me Berbain, Olivier Billet, Anne Canteaut, Nicolas Courtois, Henri
  Gilbert, Louis Goubin, Aline Gouget, Louis Granboulan, C{\'e}dric Lauradoux,
  Marine Minier, et~al.
\newblock Sosemanuk, a fast software-oriented stream cipher.
\newblock In {\em New stream cipher designs}, pages 98--118. Springer, 2008.

\bibitem[BG05]{billet2005resistance}
Olivier Billet and Henri Gilbert.
\newblock Resistance of snow 2.0 against algebraic attacks.
\newblock In {\em Cryptographers’ Track at the RSA Conference}, pages 19--28,
  San Francisco, Ca, USA, February 2005. Springer.

\bibitem[CHM{\etalchar{+}}04]{SNOW1-04}
Kevin Chen, Matt Henricksen, William Millan, Joanne Fuller, Leonie Simpson,
  Ed~Dawson, Hoon~Jae Lee, and Sang~Jae Moon.
\newblock Dragon: A fast word based stream cipher.
\newblock In {\em International Conference on Information Security and
  Cryptology}, Seoul, Korea (Republic of), December 2004.

\bibitem[CJS00]{chepyzhov2000simple}
Vladimor~V Chepyzhov, Thomas Johansson, and Ben Smeets.
\newblock A simple algorithm for fast correlation attacks on stream ciphers.
\newblock In {\em International Workshop on Fast Software Encryption}, pages
  181--195. Springer, 2000.

\bibitem[DMC09]{New-SN-07}
Blandine Debraize and Irene Marquez~Corbella.
\newblock Fault analysis of the stream cipher snow 3g.
\newblock In {\em Workshop on Fault Diagnosis and Tolerance in Cryptography
  (FDTC)}, pages 105--112, Lausanne, Switzerland, September 2009. IEEE.

\bibitem[EJ00]{SNOW1-01}
Patrick Ekdahl and Thomas Johansson.
\newblock Snow -- a new stream cipher.
\newblock In {\em 1st NESSIE Workshop}, Heverlee, Belgium, November 2000.

\bibitem[EJ02]{ekdahl2002new}
Patrik Ekdahl and Thomas Johansson.
\newblock A new version of the stream cipher snow.
\newblock In {\em International Workshop on Selected Areas in Cryptography},
  pages 47--61, St. John’s, Newfoundland, Canada, August 2002. Springer.

\bibitem[EJMY19]{ekdahl2019new}
Patrik Ekdahl, Thomas Johansson, Alexander Maximov, and Jing Yang.
\newblock A new snow stream cipher called snow-v.
\newblock {\em IACR Transactions on Symmetric Cryptology}, pages 1--42, 2019.

\bibitem[HJB09]{SNOW1-02}
Martin Hell, Thomas Johansson, and Lennart Brynielsson.
\newblock An overview of distinguishing attacks on stream ciphers.
\newblock {\em Cryptography and Communications}, 1(1):71--94, 2009.

\bibitem[JJ01]{jonsson2001correlation}
Fredrik J{\"o}nsson and Thomas Johansson.
\newblock Correlation attacks on stream ciphers over gf (2\^{} n).
\newblock In {\em IEEE International Symposium on Information Theory (ISIT),
  2001}, pages 140--140, 2001.

\bibitem[KP11]{krishnaswamy2011number}
Srinivasan Krishnaswamy and Harish~K Pillai.
\newblock On the number of linear feedback shift registers with a special
  structure.
\newblock {\em IEEE transactions on information theory}, 58(3):1783--1790,
  2011.

\bibitem[KP14]{krishnaswamy2014number}
Srinivasan Krishnaswamy and Harish~K Pillai.
\newblock On the number of special feedback configurations in linear modular
  systems.
\newblock {\em Systems \& Control Letters}, 66:28--34, 2014.

\bibitem[KTS17]{Dyn-Str-01}
Shinsaku Kiyomoto, Toshiaki Tanaka, and Kouichi Sakurai.
\newblock K2: A stream cipher algorithm using dynamic feedback control.
\newblock In {\em International Conference on Security and Cryptography
  (SECRYPT)}, Barcelona, Spain, July 2017.

\bibitem[LLP08]{lee2008cryptanalysis}
Jung-Keun Lee, Dong~Hoon Lee, and Sangwoo Park.
\newblock Cryptanalysis of sosemanuk and snow 2.0 using linear masks.
\newblock In {\em International Conference on the Theory and Application of
  Cryptology and Information Security}, pages 524--538. Springer, 2008.

\bibitem[Mat93]{matsui1993linear}
Mitsuru Matsui.
\newblock Linear cryptanalysis method for des cipher.
\newblock In {\em Workshop on the Theory and Application of of Cryptographic
  Techniques}, pages 386--397. Springer, 1993.

\bibitem[MS89]{meier1989fast}
Willi Meier and Othmar Staffelbach.
\newblock Fast correlation attacks on certain stream ciphers.
\newblock {\em Journal of cryptology}, 1(3):159--176, 1989.

\bibitem[NP14]{nia2014new}
Mohammad Sadegh~Nemati Nia and Ali Payandeh.
\newblock The new heuristic guess and determine attack on snow 2.0 stream
  cipher.
\newblock {\em IACR Cryptology ePrint Archive}, 2014:619, 2014.

\bibitem[NW06]{nyberg2006improved}
Kaisa Nyberg and Johan Wall{\'e}n.
\newblock Improved linear distinguishers for snow 2.0.
\newblock In {\em International Workshop on Fast Software Encryption}, pages
  144--162, Graz, Austria, March 2006. Springer.

\bibitem[PBC19]{Appl-02}
Maxime Pistono, Reda Bellafqira, and Gouenou Coatrieux.
\newblock Secure processing of stream cipher encrypted data issued from iot:
  Application to a connected knee prosthesis.
\newblock In {\em 41st Annual International Conference of the IEEE Engineering
  in Medicine and Biology Society (EMBC)}, Berlin, Germany, July 2019. IEEE.

\bibitem[ros]{rose1999t}


\bibitem[WBDC03]{watanabe2003distinguishing}
Dai Watanabe, Alex Biryukov, and Christophe De~Canniere.
\newblock A distinguishing attack of snow 2.0 with linear masking method.
\newblock In {\em International Workshop on Selected Areas in Cryptography},
  pages 222--233. Springer, 2003.

\bibitem[Xt11]{xiu2011zuc}
FENG Xiu-tao.
\newblock Zuc algorithm: 3gpp lte international encryption standard.
\newblock {\em Information Security and Communications Privacy}, 12, 2011.

\bibitem[ZHH07]{zeng2007high}
Guang Zeng, Wenbao Han, and Kaicheng He.
\newblock High efficiency feedback shift register: sigma-lfsr.
\newblock {\em IACR Cryptology ePrint Archive}, 2007:114, 2007.

\bibitem[ZXM]{zhang2015fast}
Bin Zhang, Chao Xu, and Willi Meier.
\newblock Fast correlation attacks over extension fields, large-unit linear
  approximation and cryptanalysis of snow 2.0.

\end{thebibliography}
\end{document}